\newcommand{\vc}[3]{\overset{#2}{\underset{#3}{#1}}}
\newtheorem{proposition}{Proposition}
\NewDocumentCommand{\MeijerG}{smmmm}
{ \IfBooleanTF{#1}
{\vic_meijerg:nnnnnn { #2 } { #3 } { #4 } { #5 } { small } { }}
{\vic_meijerg:nnnnnn { #2 } { #3 } { #4 } { #5 } { } { \; }}}
\definecolor{LightGray}{gray}{0.9}
\definecolor{DarkGray}{gray}{0.3}
\begin{document}

\title{Improved Energy-Based Signal Detection for Ambient Backscatter Communications}

\author{S. Zargari,  C. Tellambura, \textit{Fellow, IEEE}, and A. Maaref,~\IEEEmembership{Senior Member,~IEEE}
\thanks{This work was supported in part by Huawei Technologies Canada Company, Ltd.}
\thanks{Shayan Zargari and Chintha Tellambura are with the Department of Electrical and Computer Engineering, University of Alberta, Edmonton, AB T6G 1H9, Canada (e-mail: zargari@ualberta.ca; ct4@ualberta.ca).}
\thanks{Amine Maaref is with the Ottawa Wireless Advanced System Competency Centre, Huawei Canada, Ottawa, ON K2K 3J1, Canada (e-mail: Amine.Maaref@huawei.com).} \vspace{-10mm}}

\maketitle

\begin{abstract}
In ambient backscatter communication (AmBC) systems, passive tags connect to a reader by reflecting an ambient radio frequency (RF) signal. However, the reader may not know the channel states and RF source parameters. The traditional energy detector (TED) appears to be an ideal solution. However, it performs poorly under these conditions. To address this, we propose two new detectors: (1) A joint correlation-energy detector (JCED) based on the first-order correlation of the received samples and (2) An improved energy detector (IED) based on the $p$-th norm of the received signal vector. We compare the performance of IED and TED under the generalized noise model using McLeish distribution and derive a general analytical formula for the area under the receiver operating characteristic (ROC) curves. Based on our results, both detectors outperform TED. For example, the probability of detection with a false alarm rate of $1\%$ for JCED and IED is $22.97\%$ and $5.41\%$ higher, respectively, compared to TED for a single-antenna reader. Using the direct interference cancellation (DIC) technique, these gains are $34.92\%$ and $3.7\%$, respectively. With a four-antenna array at the reader and a $5\%$ false alarm rate, the JCED shows a significant BER improvement of $28.68\%$ without DIC and $48.21\%$ with DIC.
\end{abstract}

\begin{IEEEkeywords}
Ambient backscatter communications, improved energy detection, energy and correlation detection. 
\end{IEEEkeywords}

\section{Introduction}

\IEEEPARstart{T}{he} 
Internet of Things (IoT) aims to connect billions of devices through various networks \cite{3GPP}. However, maintaining these devices, such as battery charging or replacement, becomes challenging as the device count grows exponentially \cite{Nguyen_Van}.  Hence, passive or battery-free IoT devices are gaining popularity, offering low-cost connectivity and high energy efficiency. In this context, ambient backscatter communication (AmBC) is a promising low-power wireless solution, enabling tags to communicate with a reader by reflecting ambient radio frequency (RF) sources like cellular base stations (BSs), television (TV) towers, and wireless fidelity (Wi-Fi) access points (APs) \cite{Fatemeh_Rezaei,Xu10195838}. These tags operate without batteries, relying on energy harvesting (EH).

AmBC offers advantages in deploying passive IoT networks by avoiding complex modulation/demodulation, noise amplification, and self-interference at the tag \cite{Zargari10353962,Nguyen_Van}. For instance, \cite{Yongjun10109835} presents strategies for optimizing resource allocation in wireless-powered backscatter communication systems, emphasizing the need for robust and efficient resource management in large-scale IoT deployments. However, signal detection remains a key technical challenge due to 1) severe dyadic pathloss and 2) channel estimation \cite{Zargari10353962,Nguyen_Van,Fatemeh_Rezaei}.

This paper aims to develop signal detectors to overcome these challenges. To set the context and background, we first begin by reviewing current coherent and non-coherent detectors (refer to Table \ref{tab:litSummary1}).

\begin{table*}[t]
\centering
\captionsetup{font=small,labelfont={color=DarkGray,bf},textfont={color=DarkGray}}
\caption{Summary of existing AmBC signal detectors. }
\begin{tabularx}{\textwidth}{
>{\raggedright\arraybackslash}p{0.9cm}
>{\raggedright\arraybackslash}p{3.5cm}
>{\raggedright\arraybackslash}p{2.2cm}
>{\raggedright\arraybackslash}X
>{\raggedright\arraybackslash}X
}
\toprule[1.5pt]
\rowcolor{LightGray}
\textbf{\color{DarkGray} Works} & \textbf{\color{DarkGray} Detection Methods} & \textbf{\color{DarkGray} Type} & \textbf{\color{DarkGray} Pros} & \textbf{\color{DarkGray} Cons} \\
\midrule[1pt]
\cite{Yiwen_Tao} & Nonlinear filtering & Coherent & Enhances SNR using simpler analog circuits, automatic decision threshold & Requires ambient RF signal and CSI \\ \midrule 
\cite{Jing_Qian_4} & ML detector & Coherent & Evaluates outage probability, closed-form expression for BER & Requires precise ambient RF signal parameters and CSI \\ \midrule 
\cite{Sudarshan_Guruacharya} & GLRT, Bayesian testing & Non-coherent & Bayesian detector offers better performance & Bayesian detector has higher computational complexity \\ \midrule 
\cite{Zeng_Tengchan} & Statistical covariance method & Non-coherent & Outperforms TED in low SNR regions & Limited to low SNR regions \\ \midrule 
\cite{Kartheek_Devineni} & Energy detector & Non-coherent & Analysis using noncentral chi-squared distribution, derives ergodic BER & Relies on specific statistical modeling \\ \midrule 
\cite{Devineni_Kartheek_2} & Mean threshold, approximate ML detectors & Non-coherent & Extends BER analysis, works under both perfect and imperfect CSI & Performance varies with CSI accuracy \\ \midrule 
\cite{Yang9950349} & Parallel detection & Non-coherent & Does not require CSI & Handling interference and tag waveforms might become complex in dense IoT settings.
\\ \midrule 
{{\cite{Yunkai_Hu_2,Xiyu_Wang,Qianqian_Zhang_1}}} & {KNN, SVM, GMM}  &{ Machine learning} & {Demonstrates adaptability by effectively handling varying signal conditions } &  {Requires CSI with high computational complexity; sensitive to noise and label pollution; depends on initial label transmission, adding overhead.} 
\\ \midrule 
This work & JCED and IED  & Non-coherent & Improved BER performance, analyzing under different noise conditions & Complexity in optimizing $p$ for IED, requires optimization for JCED \\
\bottomrule[1.5pt]
\end{tabularx}
\label{tab:litSummary1}
\end{table*}

\subsection{Coherent detection in AmBC systems}
In principle, coherent detection requires precise signal phase information. The receiver thus extracts data from weak signals amidst noise by exploiting phase relationships with a reference signal.  This approach offers high flexibility in the choice of modulation formats.  However, its implementation is challenging due to the requirement of precise knowledge of ambient RF signal parameters and channel state information (CSI).

To address this, the reader's signal-to-noise ratio (SNR) can be improved without using precise knowledge of ambient RF signals or CSI. In \cite{Yiwen_Tao}, a nonlinear filtering method is proposed to enhance SNR using simpler analog circuits, with an automatically calculated decision threshold for data recovery from the sampled signal. Reference \cite{Jing_Qian_4} analyzes symbol detection and evaluates the outage probability as a function of system power using the maximum likelihood (ML) detector, incorporating a closed-form expression for bit error rate (BER).

\subsection{Non-coherent detection in AmBC systems} 
This approach eliminates the need for carrier phase or CSI, reducing receiver complexity. However, it may lead to lower spectral efficiency and degraded performance.

In \cite{Sudarshan_Guruacharya}, the joint probability density function (PDF) of the received signal is analyzed to develop two optimal non-coherent detectors for AmBC, based on generalized likelihood ratio test (GLRT) and   Bayesian testing. The authors in \cite{Zeng_Tengchan} propose a statistical covariance method that outperforms the TED in low SNR regions. It provides two statistics based on the probability of detection and the BER. Reference \cite{Kartheek_Devineni} demonstrates that the exact conditional PDF of the average signal energy follows a noncentral chi-squared distribution. Further, \cite{Devineni_Kartheek_2} extends the BER analysis proposed in \cite{Kartheek_Devineni} by investigating three different detectors under both perfect and imperfect CSI assumptions.

Next, \cite{Yang9950349} introduces a parallel detection approach for AmBC systems, enabling simultaneous detection of signals from multiple tags. The method uses a modified expectation maximization algorithm to cluster received signals, allowing for parallel detection of multiple tags. This clustering is enhanced by the state-flipping characteristic of tag waveforms, which aids in differentiating between signals from different tags. Efficiency is achieved by eliminating the need for CSI and reducing complexity and overhead.

{In addition, references {\cite{Yunkai_Hu_2,Xiyu_Wang,Qianqian_Zhang_1}} further investigate machine learning algorithms such as k-nearest neighbors (KNN), support vector machine (SVM), and Gaussian mixture models (GMM) for  AmBC signal detection.} 

\subsection{Motivation and Contributions}
We propose two cost-effective, non-coherent AmBC detectors to overcome the challenges faced by AmBC systems. However, it is important to highlight potential use cases for AmBC systems. Industries like logistics, warehousing, and manufacturing rely on accurate and up-to-date databases for goods, materials, and assets. Tag-based tracking can enhance productivity, provide real-time data, and improve quality control. AmBC systems are designed to operate passively, leveraging the existing RF signals in the environment for both communication and EH. This dual-purpose utilization of ambient RF signals enables the operation of devices even in low RF energy density environments. Moreover, as the tags operate passively and rely on EH capabilities, AmBC systems serve as a more appropriate foundation for passive IoT systems, which is also being explored by the 3rd generation partnership project (3GPP) \cite{3gpp_RWS}.

Our main technical  contributions can be summarized as follows:
\begin{itemize}

\item  We propose a joint correlation-energy detector (JCED), which involves a test statistic combining the energy and the correlation of received samples \cite{Morteza_Tavana}. This is a new idea that has not been previously explored in the literature \cite{Yiwen_Tao,Jing_Qian_6,Anran_Wang,Pengyu_Zhang, Sudarshan_Guruacharya,Jing_Qian_4,Zeng_Tengchan,Kartheek_Devineni,Devineni_Kartheek_2,Kartheek_Devineni}. This linear combination of hypothesis tests leads to a greater mean distance between the distributions for both hypotheses, resulting in improved BER performance. Finally, the false alarm and detection probabilities are derived. 

\item This paper also proposes an improved energy detector (IED), also known as the $p$-norm detector, where $p$ is a tunable parameter. Many previous works\cite{Sudarshan_Guruacharya, Jing_Qian_4,Zeng_Tengchan,Kartheek_Devineni,Devineni_Kartheek_2}  can be viewed as a special case of our detector with fixed $p=2$, whereas the IED uses an arbitrary positive power operation with $p>0$. Generally, the optimal value of $p$ depends on detection and false alarm probabilities and the SNR. The traditional energy detector (TED) is given by $p=2$. The paper also provides a closed-form analysis of the probabilities of detection and false alarm for sufficiently large samples.

\item  In many practical scenarios such as urban environments or IoT networks, noise can exhibit dependence structures or follow non-Gaussian distributions due to multipath propagation, interference from other devices, fading effects, or other irregularities \cite{Xiaomei,Nikolaos}. To account for this, we also consider the McLeish distribution as a generalized model for noise channels. It is a robust model, capable of representing both Gaussian and non-Gaussian noise channels \cite{mcleish1982robust}, and it shares many of the characteristics of the Gaussian distribution, including its unimodality, symmetry, finite moments, and heavy tails \cite{Yilmaz}. With this model for a generalized noise, we derive closed-form expressions for the probabilities of detection and false alarm. 

\item We initially focus on a single-antenna model to thoroughly analyze the functionalities and performance intricacies of the JCED and IED.  To leverage the advancements in AmBC, we also extend our study to incorporate a multi-antenna reader system.

\item Numerical results demonstrate that the IED outperforms the TED through optimized power operation. The superiority of the IED in the presence of Laplacian noise is also highlighted compared to other EDs. The proposed JCED surpasses both the IED and TED using the correlation between adjacent signals. A case study employing the direct interference cancellation (DIC) technique yields valuable insights, and a general analytical formula for the area under the curve (AUC) of the receiver operating characteristic (ROC) is derived for the proposed detectors.

\end{itemize}

{\it Notations}: Vectors and matrices are denoted by boldface lowercase and uppercase letters, respectively, with $\mathbf{A}^T$ representing the transpose. The absolute value and real part of $z$ are indicated by $|z|$ and $\Re\{z\}$, respectively. A circularly symmetric complex Gaussian (CSCG) random vector with mean $\boldsymbol{\mu}$ and covariance matrix $\mathbf{C}$ is expressed as $\sim \mathcal{C}\mathcal{N}(\boldsymbol{\mu},\mathbf{C})$. The McLeish distribution for a complex, circularly symmetric random variable is denoted by $\mathcal{CML}(\mu, \sigma^2, q)$, characterized by mean $\mu$, variance $\sigma^2$, and non-Gaussianity parameter $q$. The notation $\partial_{x}$ signifies the partial derivative over  $x$. Expectation and variance are represented by $\mathbb{E}[\cdot]$ and $\text{Var}[\cdot]$, respectively. $\mathbb{C}^{N\times 1}$ and $\mathbb{R}^{N \times 1}$ denote complex and real vectors of dimension $N\times 1$. The PDF of a random variable $x$ is given by $f_x(\cdot)$. The Gaussian Q-function and its inverse are denoted by $\mathcal{Q}(\cdot)$ and $\mathcal{Q}^{-1}(\cdot)$. The Gamma function is represented by $\Gamma(\cdot)$ \cite[Eq. (8.310.1)]{gradshteyn2014table}, with $\Gamma(\cdot, \cdot)$ for the upper incomplete Gamma function \cite[Eq. (8.350.2)]{gradshteyn2014table}. The $q$-th order modified Bessel function of the second kind is $K_q(\cdot)$ \cite[Eq. (8.432)]{gradshteyn2014table}, and the Meijer’s G-function is expressed as $\MeijerG{m,n}{p,q}{\cdot }{\cdot}$ \cite[Eq. (9.301)]{gradshteyn2014table}.

\begin{figure}
\centering
\includegraphics[width=3in]{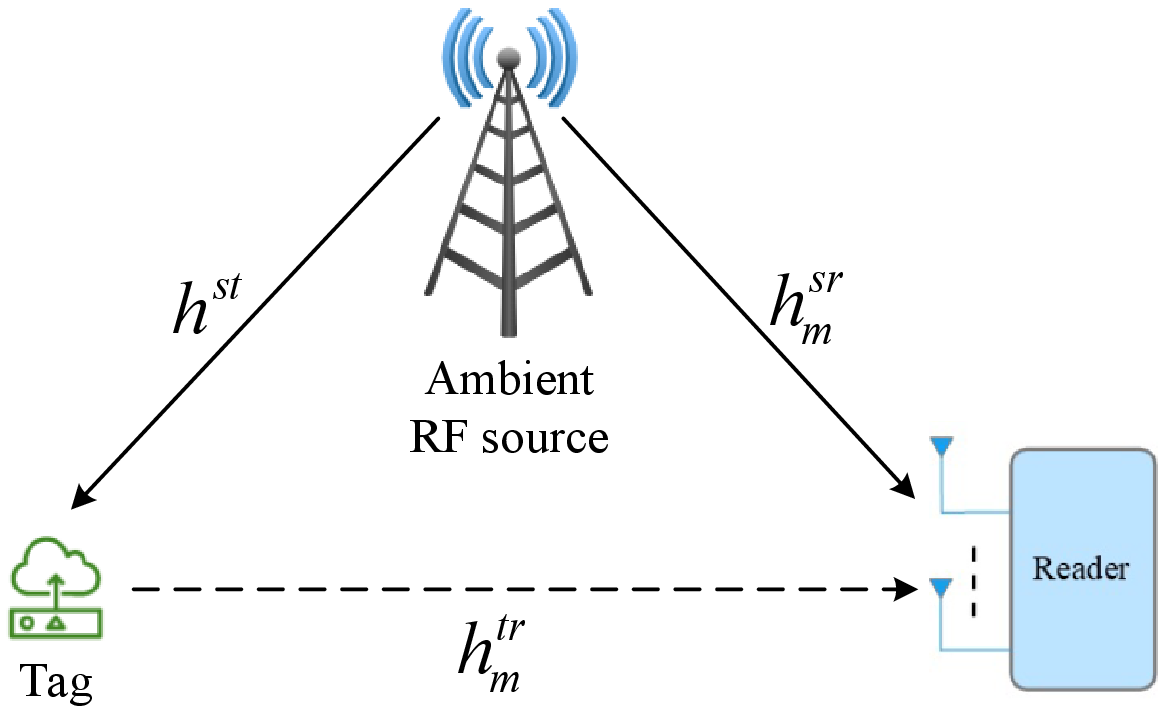}
\caption{AmBC network of an RF signal source, a reader, and a passive tag.  \vspace{-2mm}} \label{system_model} 
\end{figure}

\section{Background}
The parameters and statistics of the received signal are needed to determine an optimal detector. However, since these are generally unavailable, studies  \cite{Jing_Qian_4, Jing_Qian_2, Zeng_Tengchan} have investigated the TED. In particular, the TED test statistic is based on the average energy of the received signal samples \cite{Urkowitz}. Let $y(n)$, $n=\{1,2,\ldots N\}$, be the received signal at the reader. Then, the reader determines the data by averaging the received signal energy over $N$ samples given as
\begin{equation}\label{E_simple}
E_1 =\frac{1}{N}\sum^N_{n=1} |y(n)|^2.
\end{equation}
The TED statistic is used for binary hypothesis testing. We first describe that testing before moving on to the signal model. 
The following are the basic facts of binary hypothesis testing  \cite{levy2008principles}: 
\begin{enumerate}
\item[$\bullet$]  Hypotheses and a-priori probabilities: The hypotheses are  $\mathcal{H}_0$ and $\mathcal{H}_1$ which denote the null hypothesis and the signal hypothesis, respectively. Their corresponding a-priori probabilities are given by $\pi_0=\Pr(\mathcal{H}_0)$ and $\pi_1=\Pr(\mathcal{H}_1)$, where $\pi_0+\pi_1=1$.

\item[$\bullet$]  Observation: Consider a random vector $\mathbf{Y}$ with sample space  $\mathcal{Y}$, then, an observation is a sample vector $\mathbf{y}\in \mathbb{C}^{N\times 1}$ of $\mathbf{Y}$.

\item[$\bullet$]  Hypotheses: The conditional probabilities are used to represent the connection  between hypotheses and observation as follows:

\begin{equation}
\begin{array}{l}
\mathcal{H}_0:  ~ \mathbf{Y}	\sim f_{\mathbf{Y}}(\mathbf{y}|\mathcal{H}_0),\\ \mathcal{H}_1: ~ \mathbf{Y}	\sim f_{\mathbf{Y}}(\mathbf{y}|\mathcal{H}_1).
\end{array}  
\end{equation}

\item[$\bullet$] Decision function: Based on observation, the detector determines which hypothesis $\mathcal{H}_0$ or $\mathcal{H}_1$ is true. A decision function can be expressed as
\begin{equation}
\delta(\mathbf{y}) = \left\{ \begin{array}{l}
1 \quad \text{if decide on} \: \: \mathcal{H}_0 ,\\
0  \quad \text{if decide on} \: \: \mathcal{H}_1.
\end{array} \right.
\end{equation}

\item[$\bullet$]  Performance measure: The performance of detection can be measured based on  the probabilities of false alarm and detection, given as follows:
\begin{align}
&P_F(\delta)=\Pr(\mathcal{Y}_1 | \mathcal{H}_0)= \int_{\mathcal{Y}_1} f_{\mathbf{Y}}(\mathbf{y}|\mathcal{H}_0) d\mathbf{y}\label{P_FF},\\
&P_D(\delta)=\Pr(\mathcal{Y}_1 | \mathcal{H}_1)= \int_{\mathcal{Y}_1} f_{\mathbf{Y}}(\mathbf{y}|\mathcal{H}_1) d\mathbf{y},\label{P_DD}
\end{align}
respectively. This method chooses the test $\delta$ that maximizes $P_D(\delta)$ while guaranteeing that the probability of a false alarm, $P_F(\delta)$, is no more than $\eta$, which is called the size of the test. Since maximizing $P_D(\delta)$ and minimizing $P_F(\delta)$ cannot be achieved simultaneously, one can fix $P_F(\delta)$ and maximize $P_D(\delta)$, leading to the following optimization problem:
\begin{align}
\delta_{\text{NP}}=  \:\: &\underset{\delta}{\text{argmax} }   \: P_D(\delta),\quad \text{s.t.} \hspace{1em} P_F(\delta)\leq \eta. 
\end{align}
The Neyman-Pearson (NP) method evaluates decision rules using the ROC curve, which plots detection probability $(P_D)$ against false alarm probability $(P_F)$. A better-than-random detector's ROC lies above the diagonal, while a perfect one passes through $(0,1)$. The ROC curve is concave and satisfies $P_D \geq P_F$. The feasible region centers at $(0.5, 0.5)$, and the ROC slope at $(P_D(\tau), P_F(\tau))$ indicates the threshold $\tau$. The AUC summarizes classifier performance: a perfect classifier has AUC $1$, while a random one has $0.5$ \cite{levy2008principles}.

\end{enumerate}

\section{ AmBC System Model}\label{General_AmBC Systemodel}
Fig. \ref{system_model} depicts an AmBC network of an RF source, a $M$-antenna reader, and a passive tag. In the first part, all the nodes are assumed to be single-antenna devices. The tag encompasses both a modulation block and an EH block. For a detailed discussion of these blocks, we refer the reader to   \cite{Nguyen_Van,Azar_Hakimi_1,Fatemeh_Rezaei,Fatemeh10078244}.

\begin{enumerate}
\item \textbf{Backscatter modulation}: A tag, as a passive device, performs OOK modulation by switching between two load impedances to represent “0" or “1" by matching or mismatching the antenna impedance, respectively. The reflection coefficient of the tag is given by $\Gamma_{i} = \frac{Z_i-Z_{a}^\star}{Z_i+Z_{a}}$, where $Z_{a}$ denotes the antenna impedance of the tag and $Z_i$ is the load impedance of state $i = \{1, 2\}$  \cite{Nguyen_Van}. Specifically, $ \Gamma_i = |\Gamma_i| b_i$ where the tag can use distinct  $b_i$ values to send its data. In addition, the reflection coefficients of impedance values have a constant magnitude, i.e., $|\Gamma_i|^2=|\Gamma|^2=\xi \in (0, 1]$. In particular, $\xi = \vert \Gamma \vert^2$ denotes the power reflection coefficient at the tag satisfying $0 \leq \xi \leq 1$.  

\item \textbf{Synchronization}:  Although the tag must synchronize with the incoming signal, minor misalignment can still lead to only small errors in the backscattered signal, primarily because ambient RF sources typically transmit at significantly higher rates than backscatter tags. As demonstrated in \cite{Manideep}, the tag performs code-word translation on ambient 802.11b packets, embedding its data by altering the phase of each symbol in the packet. Despite potential minor synchronization errors, the backscattered signal still retains the 11-bit barker code and can be decoded by a reader. 
\end{enumerate}

\subsection{Channel models}
We use $ h = L_{\text{path}}^{1/2} g $ where small-scale fading coefficient  $ g \sim \mathcal{CN}(0,1) $ and $ L_{\text{path}} $ denotes macro fading effects. The model $ g \sim \mathcal{CN}(0,1) $ is predicated upon a propagation environment characterized by rich scattering that lacks direct line-of-sight (LoS) paths among the nodes \cite{Jing_Qian_4, Sudarshan_Guruacharya}. This modeling approach suits environments such as factories with obstructions and moving objects.  Similarly, tags enclosed in packaging or inside storage facilities may experience multipath propagation in storage or transportation scenarios.

A block fading model for all the channels can be assumed with a given coherence time\footnote{Although the reader might not possess full CSI, it retains the capability to approximate channel conditions using a conventional estimator \cite{Georgios_Vougioukas_1}. This estimation process involves the transmission of pilot signals, which are then reflected by the tags.}. 
The channel between the RF source-to-$m$-th antenna of the reader, the tag-to-$m$-th antenna of the reader, and the RF source-to-tag are assumed as zero-mean CSCG random variables as follows: $h^{sr}_m\sim\mathcal{CN}(0, \sigma^2_{sr})$, $h^{tr}_m\sim\mathcal{CN}(0, \sigma^2_{tr})$, and $h^{st}\sim\mathcal{CN}(0, \sigma^2_{st})$, respectively.

\subsection{Signal model}
We assume the random RF  signal from the  RF source is represented by  $s(n)$ with the transmit power of $P_s$. In particular, TV towers transmit up to $100$ kW for high frequency (VHF) and $1,000$ kW for ultra-high frequency (UHF) \cite{ecfr2023}. In 5G, 3GPP standards dictate transmit power: macro BSs have $43$ to $46$ dBm, while micro BSs range from $30$ to $36$ dBm \cite{3gpp-ts-38.104}.	Also, $s(n)$ and $s(\hat{n})$ are uncorrelated for $ n\neq \hat{n}.$ The baseband signal received by the tag at the $n$-th sampling instance can be expressed as follows:
\begin{equation}
x(n) = h^{st}s(n).
\end{equation}
We denote the backscatter modulation symbol of the tag as $b_k\in\{0, 1\}$. Notably, ambient RF sources transmit at rates significantly higher than the tags, making it reasonable to assume that $b_k$  remains consistent throughout $N$ observations. Thus, the signal backscattered by the tag can be represented as follows:
\begin{equation}
x_b(n) = \xi b_k x(n),  
\end{equation}
where $\xi \in (0, 1]$ is a scaling factor associated with the antenna gain and scattering efficiency of the tag \cite{Azar_Hakimi, Nguyen_Van}. At the $m$-th antenna of the reader, the signal corresponding to the tag symbol is given by 
\begin{align}
y_m(n) &\!\!\!=\! \left(h^{sr}_m  \!+\! \xi b_k h^{st}_m h^{tr}_m\right)s(n)   \!+\! w_m(n),~m=1,\cdots,M,
\end{align}
where $w_m(n)\sim\mathcal{CN}(0, \sigma^2_{w})$ represents AWGN and the noise samples are assumed to be independent. Further, the signal received by the $m$-th antenna of the reader can be described differently under various hypotheses as below:
\begin{equation}\label{received_signal_2}
y_m(n) = \left\{ \begin{array}{l}
h^{0}_m s(n) + w_m(n),~ \text{if} \:\:\mathcal{H}_0,\\
h^{1}_m s(n) + w_m(n), ~ \text{if} \:\:\mathcal{H}_1,
\end{array} \right.
\end{equation}
where $h^{0}_m = h^{sr}_m$ denotes the direct channel link (RF source-to-reader) and $h^{1}_m = h^{sr}_m + \xi  h^{st} h^{tr}_m$ indicates the dyadic channel link (RF source-to-tag-to reader).

\subsection{Direct interference cancellation}
Direct link interference (DLI) cancellation is possible when the reader can eliminate direct interference by decoding the symbols transmitted by the RF source. This may be possible in (a) orthogonal frequency division multiplexing (OFDM)-AmBC or (b) symbiotic radio.  For example, in \cite{Yang7841620}, the tag reflects OFDM symbols to send its data. The reader then cancels the direct interference by exploiting the unused part of the cyclic prefix (CP)\footnote{{Our circuit model integrates signal reception and uses the CP structure in OFDM systems {\cite{Yang7841620}} or channel estimation in symbiotic systems {\cite{Ruizhe, YangZ2018}} to remove DLI. It then processes and analyzes signals for power and pattern recognition, combining these findings via digital signal processing (DSP) or microcontroller unit (MCU) for effective backscatter signal detection.}}. Another intriguing method is the symbiotic radio approach, where a cooperative receiver merges the reader with the primary user to decode both user and tag signals \cite{Ruizhe, YangZ2018}. Here, we assume a symbiotic radio approach where the user has access to CSI and can estimate the RF signal through a pilot signal, facilitating the effective cancellation of direct interference {\cite{Ruizhe, YangZ2018}}.

Because of these potential DIC methods, we examine the performance of the proposed detectors in scenarios where DIC is employed. In this case, the signal received by the $m$-th antenna of the reader can be described as follows:
\begin{equation}
y_m(n) = \left\{ \begin{array}{l}
h^{\text{RI}}_m s(n) + w_m(n),~~~~\quad\quad\quad\quad  \text{if} \:\:\mathcal{H}_0,\\
(h^{\text{RI}}_m + \xi h^{st} h^{tr}_m )s(n) + w_m(n), ~\text{if} \:\:\mathcal{H}_1,
\end{array} \right.
\end{equation} 
where $h^{\text{RI}}_m\sim\mathcal{CN}(0, \epsilon)$ denotes residual interference (RI) at the $m$-th antenna of the reader, which follows a Gaussian distribution with zero mean and variance $\epsilon$. The parameter $\epsilon$ indicates the strength of the RI and is confined to the range $0\leq \epsilon\leq 1$. Under the null hypothesis $\mathcal{H}_0$, both RI and noise are present at the reader. Under the alternative hypothesis $\mathcal{H}_1$, the composite channel link is identical to the reflected link, and the DLI is removed via the DIC technique, leaving only RI.

\section{Detector Design: Single-Antenna Reader}

In this section, we consider a simple single-antenna reader setup. We examine the JCED, an advanced detector incorporating sample energy and first-order correlation to enhance BER. Then, we introduce the IED and TED. Notably, some channels have non-Gaussian noise\cite{Xiaomei}, which motivates us to investigate the performance of detectors under generalized noise conditions.

\subsubsection{\textbf{JCED}}
The JCED  utilizes both the energy and signal correlation to determine the presence of the signal. The test statistic for this detector is generated by combining the energy of the samples with first-order correlation values of the received signals through a linear combination, which is given by
\begin{align}\label{Z}
T_1&=\alpha \sum^{N-1}_{n=0}|y(n)|^2+\beta\sum^{N-2}_{n=0} y(n+1)y^*(n) = \alpha Z_1 + \beta Z_2,
\end{align}
for some $\alpha, \beta\in(0,1)$ and $\alpha+\beta=1$. This joint approach allows for a more reliable detection decision. According to the central limit theorem (CLT) theorem \cite{gnedenko1954limit}, we can conclude that for large values of $N$, the statistic $Z_1$ is expected to have a normal distribution. The resulting mean and variance of $Z_1$ can be presented as follows: 
\begin{align}
\mathbb{E}(Z_1) &= \left\{ \begin{array}{l}
(N+\gamma_0)\sigma_w^2,\quad \text{if} \:\:\mathcal{H}_0 ,\\
(N+\gamma_1)\sigma_w^2,\quad \text{if} \:\:\mathcal{H}_1,
\end{array} \right.\label{mean_U}\\[8pt] \nonumber
\text{Var}(Z_1) &= \left\{ \begin{array}{l}
(N+2\gamma_0)\sigma_w^4,\quad \text{if} \:\:\mathcal{H}_0 ,\\
(N+2\gamma_1)\sigma_w^4,\quad \text{if} \:\:\mathcal{H}_1.
\end{array} \right. 
\end{align}
where $\gamma_i=\frac{|h_i|^2E_s}{\sigma^2_w}, \forall i,$ and $E_s := \sum^{N-1}_{n=0}|s(n)|^2$. With minor modifications, the same argument can be made for $Z_2$. The terms $s(n+1)s^*(n)$ and $s(n+2)s^*(n+1)$ for this detector are not independent of one another. Therefore, we consider them two sets of independent and identically distributed (i.i.d) random variables.  Accordingly, $Z_2$ has an approximately normal distribution with a mean and variance of
\begin{align}
\mathbb{E}(Z_2) &= \left\{ \begin{array}{l}
|h_0|^2R_{ss}(1),\quad \text{if} \:\:\mathcal{H}_0 ,\\
|h_1|^2R_{ss}(1),\quad \text{if} \:\:\mathcal{H}_1,
\end{array} \right.\label{mean_R} \\ \nonumber
\text{Var}(Z_2)& = \left\{ \begin{array}{l}
(N-1)\sigma_w^4+2E_s|h_0|^2\sigma_w^2,\quad \text{if} \:\:\mathcal{H}_0 ,\\
(N-1)\sigma_w^4+2E_s|h_1|^2\sigma_w^2,\quad \text{if} \:\:\mathcal{H}_1,
\end{array} \right. 
\end{align}  
where $R_{ss}(1)$ corresponds to the autocorrelation function of the transmit signal from the RF source, which is defined as $R_{ss}(1) = \sum^{N-2}_{n=0}s(n+1)s^*(n)$. Also, it is assumed that the term $\sum^{N-2}_{n=0}|s(n+1)|^2+|s(n)|^2$ is approximated by $2E_s$ for large values of $N$.

\begin{figure}
\centering
\includegraphics[width=3.5in]{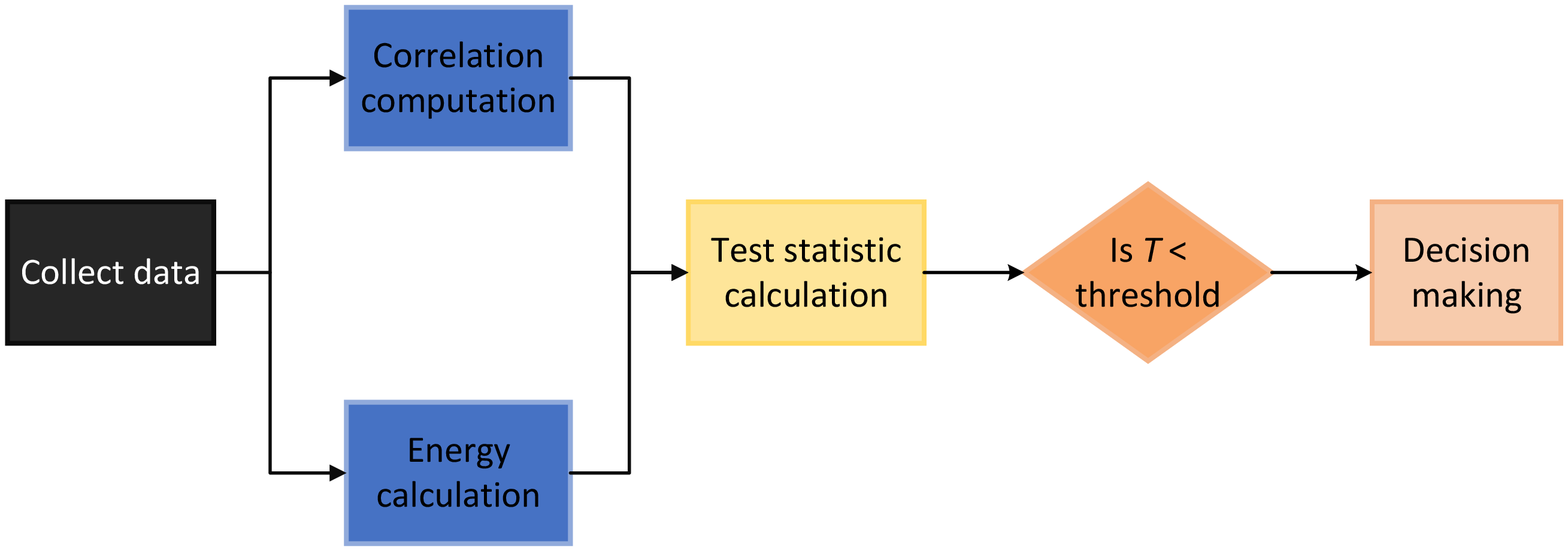}
\caption{The  JCED  binary hypothesis testing process.} \label{Diagram_jecd}
\end{figure}

\begin{proposition}\label{propro_2}
The covariance between $Z_1$ and $Z_2$ can be written as
\begin{equation}\label{covarince_correlatior}
\text{Cov}(Z_1, Z_2) = \left\{ \begin{array}{l}
2\sigma_w^2|h_0|^2 R_{ss}(1),\quad \text{if} \:\:\mathcal{H}_0,\\
2\sigma_w^2|h_1|^2 R_{ss}(1),\quad \text{if} \:\:\mathcal{H}_1. 
\end{array} \right.
\end{equation}
\end{proposition}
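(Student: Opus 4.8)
The plan is to treat the channel coefficient $h_i$ and the source samples $s(n)$ as fixed (conditioning on them, exactly as in the derivation of the means in \eqref{mean_U} and \eqref{mean_R}), so that the only randomness is the additive noise $w(n)$, taken as i.i.d. $\mathcal{CN}(0,\sigma_w^2)$ and independent across $n$. Writing $y(n)=h_i s(n)+w(n)$ and setting $\mu(n):=h_i s(n)$, I would expand
\[
|y(n)|^2 = |\mu(n)|^2 + \mu(n)w^*(n) + \mu^*(n)w(n) + |w(n)|^2,
\]
\[
y(m+1)y^*(m) = \mu(m+1)\mu^*(m) + \mu(m+1)w^*(m) + \mu^*(m)w(m+1) + w(m+1)w^*(m),
\]
and subtract the respective means to obtain the centered fluctuations $\tilde Z_1,\tilde Z_2$. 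Each fluctuation then splits into a part linear in the noise and a part quadratic in the noise: $\tilde Z_1=\sum_n(A_n+B_n)$ with $A_n=\mu(n)w^*(n)+\mu^*(n)w(n)$ and $B_n=|w(n)|^2-\sigma_w^2$, and $\tilde Z_2=\sum_m(C_m+D_m)$ with $C_m=\mu(m+1)w^*(m)+\mu^*(m)w(m+1)$ and $D_m=w(m+1)w^*(m)$.

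Since $Z_1$ is real, the natural object is $\text{Cov}(Z_1,Z_2)=\mathbb{E}[\tilde Z_1\tilde Z_2]=\sum_{n,m}\big(\mathbb{E}[A_nC_m]+\mathbb{E}[A_nD_m]+\mathbb{E}[B_nC_m]+\mathbb{E}[B_nD_m]\big)$ (this is also the cross term $2\alpha\beta\,\text{Cov}(Z_1,Z_2)$ that enters $\text{Var}(T_1)$). The strategy is to show only the first family survives. The two mixed terms $\mathbb{E}[A_nD_m]$ and $\mathbb{E}[B_nC_m]$ are each odd (third order) in the noise, hence vanish by the circular symmetry of $w(n)$. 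The quadratic–quadratic term $\mathbb{E}[B_nD_m]$ I would kill with the complex Isserlis (Wick) pairing rule: the surviving pairings of $\mathbb{E}[|w(n)|^2 w(m+1)w^*(m)]$ require either $m+1=m$ or the simultaneous coincidences $n=m$ and $n=m+1$, both impossible, and the $\sigma_w^2\,\mathbb{E}[w(m+1)w^*(m)]$ piece vanishes for the same reason that the indices $m+1$ and $m$ never coincide.

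What remains is the linear–linear contribution. Using $\mathbb{E}[w(n)w^*(m)]=\sigma_w^2\delta_{nm}$ and $\mathbb{E}[w(n)w(m)]=0$, the expectation of $A_nC_m$ retains only the two terms selected by $\delta_{n,m+1}$ and $\delta_{n,m}$, each equal to $\sigma_w^2\,\mu(m+1)\mu^*(m)=\sigma_w^2|h_i|^2 s(m+1)s^*(m)$. Summing over the admissible index range and recognizing $\sum_m s(m+1)s^*(m)=R_{ss}(1)$ gives $2\sigma_w^2|h_i|^2R_{ss}(1)$, which is the claimed identity once $h_i$ is specialized to $h_0$ under $\mathcal{H}_0$ and to $h_1$ under $\mathcal{H}_1$.

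The step I expect to be the main obstacle is the bookkeeping of the fourth-order complex Gaussian moments, specifically confirming that $\mathbb{E}[B_nD_m]$ genuinely vanishes, since a careless pairing could leave a spurious $\sigma_w^4$ term. Secondary care is needed with the conjugation convention (the real-valuedness of $Z_1$ is what makes $\mathbb{E}[\tilde Z_1\tilde Z_2]$ the right quantity and produces $R_{ss}(1)$ rather than its conjugate) and with the slightly different summation ranges of $Z_1$ and $Z_2$; here the boundary indices $n=m$ and $n=m+1$ happen to stay within the valid range, so they contribute fully and the edge effects do not alter the result.
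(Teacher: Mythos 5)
Your argument is correct and is essentially the paper's own proof: the paper likewise reduces the double sum to the index coincidences $n=m$ and $n=m+1$ (all other terms factor over independent noise samples and vanish), expands $y(n)=h_js(n)+w(n)$, and kills the odd-order and quadratic--quadratic noise terms via $\mathbb{E}[w^2]=\mathbb{E}[|w|^2w]=\mathbb{E}[|w|^2w^*]=0$, leaving the same two $\sigma_w^2|h_j|^2 s(m+1)s^*(m)$ contributions. The only (immaterial) divergence is conventional: you evaluate $\mathbb{E}[\tilde Z_1\tilde Z_2]$ while the paper centers its computation on $\mathbb{E}[\tilde Z_1\tilde Z_2^*]$, which differ by a conjugate on $R_{ss}(1)$ and coincide whenever $R_{ss}(1)$ is real.
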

\begin{proof}
Please refer to Appendix \ref{Pro_1}.
\end{proof}
Using \eqref{mean_U} and \eqref{mean_R}, we can derive the mean and variance of $T_1$ under different hypotheses, as expressed below:
\begin{align}
\mathbb{E}(T_1) &= \left\{ \begin{array}{l}
\alpha (N+\gamma_0)\sigma_w^2+\beta|h_0|^2R_{ss}(1) ,\quad \text{if} \:\:\mathcal{H}_0 ,\\
\alpha (N+\gamma_1)\sigma_w^2+\beta|h_1|^2R_{ss}(1),\quad \text{if} \:\:\mathcal{H}_1,
\end{array} \right.\label{mean_C}\\
\text{Var}(T_1)&= \left\{ \begin{array}{l}
\mathbf{w}^T \Sigma_{\mathcal{H}_0} \mathbf{w},\quad \text{if} \:\:\mathcal{H}_0 ,\\
\mathbf{w}^T \Sigma_{\mathcal{H}_1} \mathbf{w},\quad \text{if} \:\:\mathcal{H}_1, 
\end{array} \right.\label{Var_Z}
\end{align}
where $\mathbf{w}= (\alpha,~\beta)^T$ includes the associate weights and 
\begin{align}
&\boldsymbol{\Sigma}_{\mathcal{H}_j}\!\! =\!\! \begin{bmatrix}
\!\!(N\!\!+\!\!2\gamma_j)\sigma_w^4 & 2|h_j|^2\sigma_w^2 R_{ss}(1)\\
2|h_j|^2\sigma_w^2  R_{ss}^*(1) & 2E_s|h_j|^2\sigma_w^2 \!\!+\!\!(N\!\!-\!\!1)\sigma_w^4 
\end{bmatrix}, \forall \! j \! \in \!\{0, 1\}.           
\end{align}
Accordingly, the decision rule can be stated as
\begin{equation}\label{39}
T_1 \vc{\gtreqless}{\mathcal{H}_1}{\mathcal{H}_0} \tau_1.
\end{equation}
Specifically, diagram \ref{Diagram_jecd} illustrates the binary hypothesis testing steps in the JCED framework, including initial signal analysis, correlation and energy computation, and concluding with decision-making based on predefined criteria and thresholds to determine signal presence or absence.   Based on \eqref{P_FF} and \eqref{P_DD}, we define the probabilities of false alarm and detection as follows:
\begin{align}
&P_F = \Pr(T_1>\tau_1|\mathcal{H}_0) =   Q\left(\frac{\tau_1-\mathbb{E}(T_1|\mathcal{H}_0)}{\sqrt{\text{Var}(T_1|\mathcal{H}_0)}}\right),\label{P_F}\\
&P_D = \Pr(T_1>\tau_1|\mathcal{H}_1)  =  Q\left(\frac{\tau_1-\mathbb{E}(T_1|\mathcal{H}_1)}{\sqrt{\text{Var}(T_1|\mathcal{H}_1)}}\right),\label{P_D}
\end{align}
respectively. Through some manipulation of \eqref{P_F} and \eqref{P_D}, the decision rule $\tau_1$ can be expressed as
\begin{equation}\label{42}
\tau_1=Q^{-1}(P_F)\sqrt{\text{Var}(T_1|\mathcal{H}_0)}+\mathbb{E}(T_1|\mathcal{H}_0).
\end{equation}
By substituting the expression given in \eqref{42} into \eqref{P_D}, we obtain the following result:
\begin{equation}
P_D=Q\left(\frac{\sqrt{\mathbf{w}^T\boldsymbol{\Sigma}_{\mathcal{H}_0}\mathbf{w}}Q^{-1}(P_F)+\mathbf{g}^T\mathbf{w}}{\sqrt{\mathbf{w}^T\boldsymbol{\Sigma}_{\mathcal{H}_1}\mathbf{w}}}  \right),
\end{equation}
where $\mathbf{g}=\left(\left(\gamma_0-\gamma_1\right)\sigma_w^2,~\left(|h_0|^2-|h_1|^2\right)R_{ss}(1)\right)^T$. Our goal is to maximize the probability of detection for a given probability of false alarm. In terms of the optimization problem, maximizing the probability of detection is equivalent to the following problem:
\begin{subequations}
\begin{align}\label{P2.1}
\text{(P2)}: & \hspace{1em} \max_{\mathbf{w}}  \hspace{1em}    	\frac{\sqrt{\mathbf{w}^T\boldsymbol{\Sigma}_{\mathcal{H}_0}\mathbf{w}}Q^{-1}(P_F)+\mathbf{g}^T\mathbf{w}}{\sqrt{\mathbf{w}^T\boldsymbol{\Sigma}_{\mathcal{H}_1}\mathbf{w}}}, \nonumber\\
&\quad\quad \text{s.t.} \hspace{1em} \alpha + \beta = 1.
\end{align} 
\end{subequations}
The decision threshold is proportional to the weight vector $\mathbf{w}$ for a given false alarm probability. However, problem (P2) is non-convex due to square roots and division in the objective function, making it challenging to solve. Methods like Fmincon’s interior point algorithm \cite{MatlabOTB} or the approach in \cite{Zhi_Quan} can address this nonlinearity and find a suboptimal solution within constraints.

\subsubsection{\textbf{IED}}
Since the classical  TED does not necessarily minimize the probability of false alarms or maximize the probability of detection as shown in \cite[Eq. (7.2)]{kay1993fundamentals},  we propose the use of  IED to increase the detection accuracy in AmBC systems. The test statistic and decision rule for the IED can be stated as follows:
\begin{equation}\label{improved_ED}
T_2 = \frac{1}{N}\sum^{N}_{n=1}\left(\frac{|y(n)|}{\sigma_w}\right)^p  \vc{\gtreqless}{\mathcal{H}_1}{\mathcal{H}_0}  \tau_2,
\end{equation}
where $p>0$ is an arbitrary constant and $\tau_2$ is the detection threshold to be determined. It is notable that \eqref{E_simple} and \eqref{improved_ED} only differ in the square operation. The detection threshold in \eqref{improved_ED} is adjusted according to the arbitrary positive power operation $p$. When $p=2$, the TED becomes a special case of the IED. Using \cite[Eq. 3.462.9]{gradshteyn2014table}, the mean and variance of $T_2$ under different hypotheses can be expressed as
\begin{align}
\mathbb{E}\left\{T_2|\mathcal{H}_i\right\}&  = \frac{2^{p/2}}{\sqrt{\pi}}\Gamma\left(\frac{p+1}{2}\right)\left(1+\gamma_i\right)^{p/2},\nonumber\\
\text{Var}\left\{T_2|\mathcal{H}_i\right\}& = \frac{2^{p}(1\!+\!\gamma_i)^p\Gamma(\frac{2p+1}{2})}{N\sqrt{\pi}}\!-\!\frac{2^p(1\!+\!\gamma_i)^p}{N\pi}\Gamma^2\left(\!\frac{p\!+\!1}{2}\!\right),  
\end{align}
where $\gamma_0=\frac{|h_0|^2P_s}{\sigma^2_w}$ and $\gamma_1=\frac{|h_1|^2P_s}{\sigma^2_w}$, respectively. It can be seen that the distribution of $T_2$ under hypotheses $\mathcal{H}_0$ and $\mathcal{H}_1$ follows a Chi-square or a Gamma distribution given by
\begin{align}
&  \text{Pr}_{T_2|\mathcal{H}_i}(x) = \frac{1}{\theta^{k_i}_i\Gamma(k_i)}x^{k_i-1}e^{-\frac{x}{\theta_i}},~x \geq 0, \forall i \in \{0,1\}.
\end{align}
with given shape and scale parameters as follows:
\begin{align}
\bar{k}_i&=\frac{\mathbb{E}^2\left\{T_2|\mathcal{H}_i\right\}}{ \text{Var}\left\{T_2|\mathcal{H}_i\right\}}=\frac{N\Gamma^2(\frac{p+1}{2})}{\Gamma(\frac{2p+1}{2})\sqrt{\pi}-\Gamma^2(\frac{p+1}{2})},\nonumber\\
\hat{\theta}_i&\!=\!\frac{\text{Var}\left\{T_2|\mathcal{H}_i\right\}}{\mathbb{E}\left\{T_2|\mathcal{H}_i\right\}}=\frac{2^{p/2}(1\!+\!\gamma_i)^{p/2}}{N} \!\times\!\frac{\Gamma(\frac{2p+1}{2})\sqrt{\pi} \!-\!\Gamma^2(\frac{p+1}{2})}{\Gamma(\frac{p+1}{2})\sqrt{\pi}}.
\end{align}
Given \eqref{P_FF} and the distribution of the gamma function, we can derive the following closed-form expression for the detection threshold:
\begin{equation}\label{search_1}
\tau_2 = F^{-1}_{T_2|\mathcal{H}_0}(1-P_F,\hat{k}_0,\hat{\theta}_0).
\end{equation}
Similarly, by using \eqref{P_DD}, the probability of detection for the IED can be expressed as
\begin{align}\label{P_D_3}
P_D &= 1 - F_{T_2|\mathcal{H}_1}(\tau_2,\hat{k}_1,\hat{\theta}_1)  \nonumber \\ &= 1 - F_{T_2|\mathcal{H}_1}\left(F^{-1}_{T_2|\mathcal{H}_0}(1-P_F,\hat{k}_0,\hat{\theta}_0),\hat{k}_1,\hat{\theta}_1\right).
\end{align}
Specifically, $ F_{T_2|\mathcal{H}_1}(x,\hat{k}_1,\hat{\theta}_1)  = \int_0^{t}   \frac{1}{\hat{\theta}_1^{\hat{k}_1}\Gamma(\hat{k}_1)}x^{\hat{k}_1-1}e^{-\frac{x}{\hat{\theta}_1}}\text{d}x$ represents the cumulative distribution function (CDF) of the Gamma distribution with shape parameter $\hat{k}_1$ and scale parameter $\hat{\theta}_1$. In addition,  $F^{-1}_{T_2|\mathcal{H}_0}(x,\hat{k}_0,\hat{\theta}_0)$ is the inverse CDF with shape parameter $\hat{k}_0$ and scale parameter $\hat{\theta}_0$. A simple line search can numerically obtain the optimal value of $p$ via \eqref{search_1} and \eqref{P_D_3}. This procedure involves numerically searching for the value of $p$ that maximizes the probability of detection for a given false alarm rate. 

\subsubsection{\textbf{TED}}
This is a special case of the IED with $p=2$. It is used to reduce the need for other parameter values and channel knowledge \cite{Kartheek_Devineni_2,Kang_Lu,5208031,6987540}. The test statistic and decision rule for TED can be written as
\begin{equation}\label{8}
T_3 = \frac{1}{N}\sum^{N}_{n=1}  \frac{|y(n)|^2 }{\sigma^2_w}  \vc{\gtreqless}{\mathcal{H}_1}{\mathcal{H}_0}  \tau_3,
\end{equation}
where $\tau_3$ denotes the detection threshold. The PDF of $T_3$ under $\mathcal{H}_0$ and $\mathcal{H}_1$ can be shown to follow a chi-square distribution or a Gamma distribution with shape parameters $k_0=\frac{N}{2}$ and $k_1=\frac{N}{2}$, scale parameters $\theta_0=\frac{2}{N}(1+\gamma_0)$ and $\theta_1=\frac{2}{N}(1+\gamma_1)$, respectively \cite{Fadel}. Using the probability of false alarm in \eqref{P_FF}, the NP rule can be used to determine the detection threshold as follows:
\begin{equation}\label{dete_threshod_17}
\tau_3 = F^{-1}_{T_3|\mathcal{H}_0}(1-P_F,k_0,\theta_0).
\end{equation}
With the help of \eqref{P_DD}, the probability of detection for TED can be derived as below:
\begin{align}\label{P_D_2}
P_D &= 1 - F_{T_3|\mathcal{H}_1}(\tau_3,k_0,\theta_0) \nonumber \\& = 1 - F_{T_3|\mathcal{H}_1}\left(F^{-1}_{T_3|\mathcal{H}_0}(1-P_F,k_0,\theta_0),k_1,\theta_1\right).
\end{align}

\begin{table*}[t]
\centering
\captionsetup{font=small,labelfont={color=DarkGray,bf},textfont={color=DarkGray}}
\caption{Comparison of different detectors in the AmBC system. }
\begin{tabularx}{\textwidth}{
>{\raggedright\arraybackslash}X
>{\raggedright\arraybackslash}X
>{\raggedright\arraybackslash}X
>{\raggedright\arraybackslash}X
>{\raggedright\arraybackslash}X
}
\toprule[1.5pt]
\rowcolor{LightGray}
\textbf{\color{DarkGray} Feature} & \textbf{\color{DarkGray} JCED} & \textbf{\color{DarkGray} IED} & \textbf{\color{DarkGray} TED} & \textbf{\color{DarkGray} ML} \\
\midrule[1pt]
Detection Methodology & Utilizes both energy and signal correlation & Enhances TED by introducing a variable exponent $p$ & Based on the energy of the received signal & Based on the average power of received signal samples \\
\midrule
Computational Complexity & Moderate with optimization required &  Moderate with parameter $p$ search & Low  & Moderate with additional channel estimation \\
\midrule
Optimization Requirement & Requires optimization for best weights & Requires searching for optimal $p$ &  $\times$ &  $\times$ \\
\midrule
CSI Dependency & Low & Low & Low & High \\
\midrule
Implementation Difficulty & Moderate, dependent on optimization & Moderate, dependent on $p$ search method &  Simple & Complex due to channel estimation \\
\bottomrule[1.5pt]
\end{tabularx}
\label{tab:litSummary}
\end{table*}

Finally, Table \ref{tab:litSummary} compares different AmBC signal detection techniques.

\subsubsection{\textbf{IED under generalized noise}}
In this subsection, we evaluate the performance of the IED under diverse noise conditions using the McLeish distribution \cite{mcleish1982robust}, chosen for its compatibility with both Gaussian and non-Gaussian noise. This distribution, characterized by unimodality, symmetry, finite moments, and heavier tails, effectively models environments with noise deviations from ideal Gaussian behavior, typically caused by external disturbances or anomalies.

The generalized noise is represented as $w[\cdot]=\mathcal{CML}(0,\sigma_w^2,q)$, where $\sigma_w^2$ denotes the noise variance, and $q \in \mathbb{R}^{+}$ encapsulates the non-Gaussianity of the noise. Here, $q$ is a crucial parameter, extending the model's scope beyond Gaussian limits. The PDF of the McLeish distribution is formulated as follows \cite[Eq. (85)]{Yilmaz}:
\begin{equation}
f_{w}(w)= \frac{2\sqrt{q}|w|^{q-1}}{\sqrt{2\sigma_w^2}\pi \Gamma(q)} K_{q-1}\left(\sqrt{\frac{2q}{\sigma_w^2}}|w|\right).
\end{equation}
This PDF adaptively transitions across various noise models based on the value of $q$. Specifically, for $q=1$, it aligns with the circularly symmetric complex (CSC) Laplacian distribution, offering a noise model with heavier tails than the Gaussian. As $q$ approaches infinity ($q \rightarrow +\infty$), the distribution converges to the Gaussian model, ideal for standard noise conditions. In the extreme case where $q \rightarrow 0^{+}$, it resembles Dirac’s distribution, representing a highly localized noise model. 

We now  define the test statistic for a sufficiently large number of received signal samples as follows:
\begin{equation}
T_4 = \frac{1}{N}\sum^{N}_{n=1}
|y(n)|^p  \vc{\gtreqless}{\mathcal{H}_1}{\mathcal{H}_0}  \tau_4.
\end{equation}
Although the underlying noise does not appear to be Gaussian, the PDF of $T_4$ closely resembles the shape of the Gaussian distribution. Using \eqref{P_F} and \eqref{P_D}, $\tau_4$ can be calculated using a simple closed-form expression as follows:
\begin{equation}\label{optimal_threhold}
\tau_4 =\mathcal{Q}^{-1}(P_F)\sqrt{\text{Var}(T_4|\mathcal{H}_0)}+\mathbb{E}(T_4|\mathcal{H}_0).
\end{equation}
The  mean and variance of the test statistic under hypothesis $\mathcal{H}_j$, $\forall j \in \{0, 1\}$ are  as follows:
\begin{align}
\mathbb{E}(T_4|\mathcal{H}_j)&=\mathbb{E}\left[\left|h_js(n)+w(n)\right|^{p}\right],\\
\text{Var}(T_4|\mathcal{H}_j)&=\!\frac{1}{N}\Big(\mathbb{E}\left[\left|h_js(n)+w(n)\right|^{2p}\right]\nonumber \\&  \hspace{22mm}-\mathbb{E}\left[\left|h_js(n)+w(n)\right|^{p}\right]^{2}\Big).
\end{align}

\begin{proposition}\label{propro_1}
The exact expression of test statistic moments under different hypotheses for arbitrary $p$ excluding any even integer $p \geq 2$ is represented as
\begin{align}\label{varaince_genralized}
\!\!\!\mathbb{E}\left[\left|h_js(n)+w(n)\right|^{p}\Big|{\mathcal{H}_{j}}\right]&=\frac{\Gamma(1+\frac{p}{2})({|h_j|^2P_s})^{\frac{p}{2}}}{\Gamma(q)\Gamma(-\frac{p}{2})}\nonumber\\&\hspace{-4mm}\times \MeijerG{1,2}{2,1}{ 1-q,~\frac{p}{2}+1 \\ 0 } { \frac{1}{\gamma_j q} }.
\end{align}
As for the second case where $p=\{2,4,6, \ldots$\}, the above equation is found to be\begin{align}\label{varaince_genralized_2}
& \mathbb{E}\left[\left|h_js(n)+w(n)\right|^{p}\Big|{\mathcal{H}_{j}}\right] =\frac{\Gamma(1+\frac{p}{2})}{\Gamma(q)} \nonumber\\&
\hspace{12mm} \times \sum^{p/2}_{k=0} \dbinom{p/2}{k} \left({|h_j|^2P_s}\right)^{\frac{p}{2}-k}\sigma_w^{2k}\Gamma(k+q)q^{-k}.
\end{align}
\end{proposition}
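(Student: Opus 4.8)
The plan is to exploit the representation of the McLeish law as a Gaussian scale (variance) mixture, which collapses the $p$-th absolute moment into a conditionally Rayleigh moment averaged over a Gamma weight. Concretely, I would write each noise sample as $w(n)=\sqrt{V/q}\,\tilde w$ with $\tilde w\sim\mathcal{CN}(0,\sigma_w^2)$ and $V$ an independent Gamma variate of shape $q$ and unit scale, i.e. with density $v^{q-1}e^{-v}/\Gamma(q)$ on $v>0$. The first step is to confirm that this mixture reproduces the stated McLeish density: integrating the conditional complex-Gaussian density against the Gamma weight and invoking the standard Bessel integral $\int_0^\infty v^{\nu-1}e^{-A/v-v}\,dv=2A^{\nu/2}K_\nu(2\sqrt A)$ \cite[Eq. (3.471.9)]{gradshteyn2014table} with $\nu=q-1$ returns precisely the $K_{q-1}$ kernel, which pins down the scale of the representation.

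With this in hand, the second step is the conditioning argument. Under $\mathcal{H}_j$ and for a given channel realisation, the received sample is $r=h_js(n)+w(n)$, where the signal contribution is $h_js(n)\sim\mathcal{CN}(0,|h_j|^2P_s)$ and is independent of $w(n)$. Conditioning on the mixing variable $V$ turns $r$ into a sum of two independent circularly symmetric Gaussians, so that $r\mid V\sim\mathcal{CN}\!\big(0,\,|h_j|^2P_s+\sigma_w^2V/q\big)$; hence $|r|$ is conditionally Rayleigh, and the Rayleigh absolute-moment formula gives $\mathbb{E}\!\left[|r|^{p}\mid V\right]=\Gamma(1+\tfrac p2)\big(|h_j|^2P_s+\sigma_w^2V/q\big)^{p/2}$. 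Averaging over $V$ then reduces the entire proposition to the single integral
\[
\mathbb{E}\!\left[|r|^{p}\mid\mathcal{H}_j\right]=\frac{\Gamma(1+\tfrac p2)}{\Gamma(q)}\int_0^\infty\Big(|h_j|^2P_s+\tfrac{\sigma_w^2}{q}V\Big)^{p/2}V^{q-1}e^{-V}\,dV .
\]

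The two cases of the statement then arise according to whether the factor $(\cdot)^{p/2}$ terminates. For even $p=2,4,6,\dots$ the binomial expansion of $\big(|h_j|^2P_s+\sigma_w^2V/q\big)^{p/2}$ is finite, so term-by-term integration against $V^{q-1}e^{-V}$ uses only $\int_0^\infty V^{k+q-1}e^{-V}\,dV=\Gamma(k+q)$ and delivers the closed finite sum \eqref{varaince_genralized_2} directly. For arbitrary non-even $p$ I would factor out $(|h_j|^2P_s)^{p/2}$, rewrite the remaining factor $\big(1+V/(\gamma_jq)\big)^{p/2}$ through its Mellin--Barnes representation $(1+z)^{p/2}=\tfrac{1}{\Gamma(-p/2)}\MeijerG{1,1}{1,1}{1+\frac{p}{2} \\ 0}{z}$ -- which is exactly where the $\Gamma(-p/2)$ in the denominator of \eqref{varaince_genralized} originates -- and then evaluate the $V$-integral by the Gradshteyn--Ryzhik identity $\int_0^\infty e^{-V}V^{\rho-1}G^{m,n}_{p,q}(\omega V)\,dV$ \cite[Eq. (7.813.1)]{gradshteyn2014table}, which appends the upper parameter $1-\rho=1-q$ and preserves the argument $\omega=1/(\gamma_jq)$, landing exactly on the $\MeijerG{1,2}{2,1}{1-q,~\frac{p}{2}+1 \\ 0}{\frac{1}{\gamma_j q}}$ of \eqref{varaince_genralized}.

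I expect the Meijer-$G$ step for general $p$ to be the main obstacle: one must correctly identify $(1+z)^{p/2}$ as a $G^{1,1}_{1,1}$, then apply the Laplace-type $G$-function integral while tracking every index shift so that the output order $G^{1,2}_{2,1}$, the new parameter $1-q$, and the argument $1/(\gamma_jq)$ all emerge as stated; the even case is comparatively routine. It is also worth recording the structural reason the statement bifurcates: $\Gamma(-p/2)$ has poles exactly at the even integers, so the Mellin--Barnes representation of $(1+z)^{p/2}$ degenerates there and one is forced back onto the terminating binomial expansion, which is precisely the even-$p$ branch \eqref{varaince_genralized_2}.
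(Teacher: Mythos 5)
Your proof is correct and follows essentially the same route as the paper's Appendix B: the same Gaussian scale-mixture decomposition $w=\sqrt{G}\,C$ (your $V=qG$ is merely a reparametrization), the same conditional-Rayleigh moment $\Gamma(1+\tfrac{p}{2})\bigl(|h_j|^2P_s+G\sigma_w^2\bigr)^{p/2}$, the same Mellin--Barnes representation of $(1+z)^{p/2}$ as a $G^{1,1}_{1,1}$ followed by the Laplace-type $G$-function integral for non-even $p$, and the same terminating binomial expansion for even $p$. The only differences are cosmetic: you add an explicit verification of the mixture against the McLeish density and invoke Gradshteyn--Ryzhik 7.813.1 where the paper cites the equivalent Prudnikov 2.24.3.1.
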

\begin{proof}
Please refer to Appendix \ref{app_2}.
\end{proof}
Furthermore, for the DIC case under hypothesis  ${\mathcal{H}_{0}}$, it can be easily shown that
\begin{equation}
\mathbb{E}\left[\left|w(n)\right|^p \Big| {\mathcal{H}_{0}}\right] = \frac{\Gamma\left(\frac{p}{2} + q\right)\Gamma\left(\frac{p}{2} + 1\right)}{\Gamma(q)q^{p/2}} \sigma_w^p.
\end{equation}
To determine the optimal $p$ value, one must solve the following optimization problem:
\begin{subequations}
\begin{align}\label{P2.2}
\text{(P1)}: & \hspace{1em} \underset{p}{\text{argmax}}  \hspace{1em}  \frac{\tau_4-\mathbb{E}(T_4|\mathcal{H}_1)}{\sqrt{\text{Var}(T_4|\mathcal{H}_1)}},   \quad \text{s.t.} \hspace{1em} p>0,
\end{align} 
\end{subequations}
which can be numerically computed by a simple line search.

\subsubsection{\textbf{JCED under generalized noise}}
Next, we will study JCED under generalized noise. According to Appendix \ref{app_2}, let's consider $w(n) = \sqrt{G(n)}C(n)$, where $C(n) \sim \mathcal{CN}(0,\sigma_w^2)$ and $G(n)$ is gamma distributed. However, the variance of $Z_1$ can be presented as follows:
\begin{align}
\text{Var}(Z_1) &= \left\{ \begin{array}{l}
\left(N\left(1 + \frac{2}{q}\right)+2\gamma_0\right)\sigma_w^4,\quad \text{if} \:\:\mathcal{H}_0 ,\\
\left(N\left(1 + \frac{2}{q}\right)+2\gamma_1\right)\sigma_w^4,\quad \text{if} \:\:\mathcal{H}_1,
\end{array} \right.
\end{align}
where the variance of $\Re\{w^*(n)\}$ and $|w(n)|^2$ is calculated as
\begin{align}
\text{Var}(\Re\{w^*(n)\}) &= \mathbb{E}\left[\left\{\sqrt{G(n)} \Re\{C^*(n)\}\right\}^2\right] \nonumber\\
&= \mathbb{E}\left[G(n)\right] \cdot \mathbb{E}\left[\left\{\Re\{C^*(n)\}\right\}^2\right]= \frac{\sigma_w^2}{2},\\
\text{Var}(|w(n)|^2) &= \mathbb{E}\left[G(n)^2|C(n)|^4\right] - \left(\mathbb{E}\left[G(n)|C(n)|^2\right]\right)^2 \nonumber \\
&= \left(1 + \text{Var}(G(n))\right)2\sigma_w^4 - \left(\sigma_w^2\right)^2\nonumber \\
&= \left(1 + \frac{1}{q}\right)2\sigma_w^4 - \sigma_w^4= \left(1 + \frac{2}{q}\right)\sigma_w^4,
\end{align}
respectively. Also, the mean and variance of $Z_2$ and covariance between $Z_1$ and $Z_2$ are equivalent to \eqref{mean_R} and \eqref{covarince_correlatior}, respectively.

\section{Detector Design: Multi-Antenna Reader}
This study primarily focuses on a single-antenna model to conduct a detailed analysis of the JCED and IED, laying a solid foundation for understanding their core functionalities. However, we also incorporate a multi-antenna reader to enhance signal processing capabilities \cite{Ma2015}. This approach paves the way for future research expansions into multi-antenna RF sources and multi-tag environments, where we aim to explore advanced interference cancellation and signal detection techniques.

As a simple scenario, $M$ antennas are assumed to be located at spatially independent positions, so $M$ degree-of-freedom can be fully exploited during the ED process \cite{Wang6388431,Kalamkar2015ImpactOA}.

\subsubsection{\textbf{JCED}}
The test statistic for this detector is constructed using a linear combination of the energy of the samples and first-order correlation values of the received signals given by
\begin{align} 
T_5&=\alpha \sum^{N-1}_{n=0}\sum^{M}_{m=1}|y_m(n)|^2+\beta \sum^{N-2}_{n=0}\sum^{M}_{m=1} y_m(n+1) y_m^*(n) \nonumber \\&= \alpha Z_1 + \beta Z_2.
\end{align}
According to the CLT \cite{gnedenko1954limit}, we can conclude that for large values of $N$, the statistic $Z_1$ is expected to have a normal distribution. Therefore, the resulting mean and variance of $Z_1$ are given by
\begin{align}\label{522}
\mathbb{E}(Z_1) &= \left\{ \begin{array}{l}
M (E_s\sigma_0^2+N\sigma_w^2),\quad \text{if} \:\:\mathcal{H}_0 ,\\
M (E_s\sigma_1^2+N\sigma_w^2),\quad \text{if} \:\:\mathcal{H}_1,
\end{array} \right.\\[8pt] \nonumber
\text{Var}(Z_1) &= \left\{ \begin{array}{l}
M(2E_s\sigma_0^2+N\sigma_w^4),\quad \text{if} \:\:\mathcal{H}_0 ,\\
M(2E_s\sigma_1^2+N\sigma_w^4),\quad \text{if} \:\:\mathcal{H}_1.
\end{array} \right.
\end{align}
Based on \eqref{received_signal_2}, $\sigma^2_0$ and $\sigma^2_1$ are variances of $h^0_m$ and $h^1_m$, respectively. Consequently, $Z_2$ has approximately a normal distribution with a mean and variance of
\begin{align}
\mathbb{E}(Z_2) &= \left\{ \begin{array}{l}
M\sigma_0^2R_{ss}(1),\quad \text{if} \:\:\mathcal{H}_0 ,\\
M\sigma_1^2R_{ss}(1),\quad \text{if} \:\:\mathcal{H}_1,
\end{array} \right.  \\ \nonumber
\text{Var}(Z_2)& = \left\{ \begin{array}{l}
M(N-1)\sigma_w^4+2ME_s\sigma_0^2\sigma_w^2,\quad \text{if} \:\:\mathcal{H}_0 ,\\
M(N-1)\sigma_w^4+2ME_s\sigma_1^2\sigma_w^2,\quad \text{if} \:\:\mathcal{H}_1,
\end{array} \right.
\end{align}  
respectively. The covariance between $Z_1$ and $Z_2$ can be represented as
\begin{equation} 
\text{Cov}(Z_1, Z_2) = \left\{ \begin{array}{l}
2M\sigma_w^2\sigma_0^2 R_{ss}(1),\quad \text{if} \:\:\mathcal{H}_0,\\
2M\sigma_w^2\sigma_1^2 R_{ss}(1),\quad \text{if} \:\:\mathcal{H}_1. 
\end{array} \right.
\end{equation}
The mean and variance of the test statistic $T_5$ under different hypotheses can then be stated as
\begin{align}
\mathbb{E}(T_5) &= \left\{ \begin{array}{l}
\alpha M (E_s\sigma_0^2+N\sigma_w^2)  +\beta M\sigma_0^2R_{ss}(1) , \quad \text{if} \:\:\mathcal{H}_0 ,\\
\alpha M  (E_s\sigma_1^2+N\sigma_w^2)  +\beta M\sigma_1^2R_{ss}(1), \quad \text{if} \:\:\mathcal{H}_1,
\end{array} \right.\\
\text{Var}(T_5)&= \left\{ \begin{array}{l}
\mathbf{w}^T \Sigma_{\mathcal{H}_0} \mathbf{w}, \quad \text{if} \:\:\mathcal{H}_0 ,\\
\mathbf{w}^T \Sigma_{\mathcal{H}_1} \mathbf{w}, \quad \text{if} \:\:\mathcal{H}_1, 
\end{array} \right.
\end{align}
where $\mathbf{w}= (\alpha,~\beta)^T$ includes the associate weights and
\begin{align}
&	\Sigma_{\mathcal{H}_j}\!\! = \!\!\begin{bmatrix}
M(2E_s\sigma_j^2\sigma_w^2\!+\!N\sigma_w^4)&  	2M\sigma_w^2\sigma_j^2 R_{ss}(1)\\
2M\sigma_w^2\sigma_j^2 R_{ss}(1) & M(N\!-\!1)\sigma_w^4\!+\!2ME_s\sigma_j^2\sigma_w^2,
\end{bmatrix}\!\!,      
\end{align}
$\forall j \in \{0, 1\}$. Similar to \eqref{P2.1}, we can maximize the probability of detection for a given probability of false alarm. 

\subsubsection{\textbf{IED}}
The test statistic and decision rule for the IED can be stated as follows:
\begin{equation} 
T_6 =  \sum^{N}_{n=1} \sum^{M}_{m=1}\left(\frac{|y_m(n)|}{\sigma_w}\right)^p  \vc{\gtreqless}{\mathcal{H}_1}{\mathcal{H}_0}  \tau_6.
\end{equation}
where $\tau_6$ is the detection threshold to be determined. Based on the CLT \cite{gnedenko1954limit}, the mean and variance of $T_6, \forall i \in \{0,1\},$ can be expressed as
\begin{align}
\mathbb{E}\left\{T_6|\mathcal{H}_i\right\} &= \frac{M2^{p/2}}{\sqrt{\pi}}\Gamma\left(\frac{p+1}{2}\right)\left(1+\gamma_i\right)^{p/2},\nonumber\\
\text{Var}\left\{T_6|\mathcal{H}_i\right\} &= \frac{M2^{p}(1+\gamma_i)^p\Gamma(\frac{2p+1}{2})}{\sqrt{\pi}}\nonumber\\&\hspace{10mm}-\frac{M^22^p(1+\gamma_i)^p}{\pi}\Gamma^2\left(\frac{p+1}{2}\right),  
\end{align}
Using \eqref{P_F} and \eqref{P_D}, the probability of detection can be obtained as
\begin{equation}\label{p_d_iedm}
P_D\! = \!Q \!\left(\!\!\frac{\sqrt{\text{Var}(T_6|\mathcal{H}_0)}Q^{-1}(P_F)\!+\!	\mathbb{E}\left\{T_6|\mathcal{H}_0\right\}\!-\!	\mathbb{E}\left\{T_6|\mathcal{H}_1\right\} }{\sqrt{\text{Var}(T_6|\mathcal{H}_1)}}\!\!\right)\!\!.
\end{equation}

\subsubsection{\textbf{TED}}
The decision rule of the ED can be written as
\begin{equation}
T_7 =  \sum^{N}_{n=1}\sum^{M}_{m=1} \left|{y_m(n)}\right|^2  \vc{\gtreqless}{\mathcal{H}_1}{\mathcal{H}_0}  \tau_7.
\end{equation}
Generally, the statistic $T_7$ is Chi-squared distributed with $2MN$ degrees of freedom since it is the sum of the squared Gaussian random variables. According to the CLT \cite{gnedenko1954limit}, we can conclude that for large values of $N$,  the statistic $T_7$ is expected to have a normal distribution. The mean and variance of $T_7$ are given by
\begin{align}
\mathbb{E}(T_7) &= \left\{ \begin{array}{l}
MN (P_s\sigma_0^2+\sigma_w^2),\quad \text{if} \:\:\mathcal{H}_0 ,\\
MN (P_s\sigma_1^2+\sigma_w^2),\quad \text{if} \:\:\mathcal{H}_1,
\end{array} \right.\\[8pt] \nonumber
\text{Var}(T_7) &= \left\{ \begin{array}{l}
MN(P_s\sigma_0^2+\sigma_w^2)^2,\quad \text{if} \:\:\mathcal{H}_0 ,\\
MN(P_s\sigma_1^2+\sigma_w^2)^2,\quad \text{if} \:\:\mathcal{H}_1.
\end{array} \right.
\end{align}
Consequently, the probability of detection is given by \eqref{p_d_iedm}.

\section{Deriving the AUC of ROC}
Although the ROC curve provides a comprehensive evaluation of detector performance, there are cases where a single metric is adequate for assessing effectiveness. In such instances, the area under the curve (AUC) is a suitable evaluation metric. The AUC is widely recognized as a valid measure of detection capability, supported by the area Theorem \cite{wickens2001elementary} and research findings in \cite{Syed_Safwan} and \cite{hanley1982meaning}. We derive a general analytical formula for the AUC to determine the performance of IED and TED. According to the CLT \cite{gnedenko1954limit},  $P_D$ and $P_F$ expressions  for different detectors can be stated  as
\begin{align}
&P_F = \Pr(W>\lambda|\mathcal{H}_0)  = \mathcal{Q}\left(\frac{\lambda-\mathbb{E}\left\{W|\mathcal{H}_0\right\}}{\sqrt{{\text{Var}\left\{W|\mathcal{H}_0\right\}}}} \right),\\ \nonumber
&P_D = \Pr(W>\lambda|\mathcal{H}_1)  = \mathcal{Q}\left(\frac{\lambda-\mathbb{E}\left\{W|\mathcal{H}_1\right\}}{\sqrt{{\text{Var}\left\{W|\mathcal{H}_1\right\}}}} \right),
\end{align}
where $\lambda\in\{\tau_i,~i=2,\ldots,4\}$ and $W\in\{T_i,~i=2,\ldots,4\}$. The area covered by the ROC is 
\begin{equation}\label{588}
A(\gamma_j) = \int^1_0 P_D(\lambda,\gamma_j)\text{d}P_F(\lambda), j \in \{0, 1\}. 
\end{equation}
Eq. \eqref{588}  can also be expressed as an integral  with respect to $\lambda$, as shown below:
\begin{equation}\label{AUC}
A(\gamma_j)=-\int^{{\infty}}_{-\infty} P_D(\lambda,\gamma_j)\frac{\partial P_F(\lambda)}{\lambda} \text{d}\lambda,
\end{equation}
where the partial derivative $\frac{\partial P_F(\lambda)}{\lambda}$ in \eqref{AUC} can be evaluated as
\begin{equation}\label{566}
\frac{\partial P_F(\lambda)}{\lambda}  =  \frac{\partial }{\lambda} \frac{1}{2\pi}\int^{\infty}_{\frac{\lambda-\mathbb{E}\left\{W|\mathcal{H}_0\right\} }{\sqrt{ \text{Var}\left\{W|\mathcal{H}_0\right\}}}}\text{exp}(-\frac{t^2}{2})\text{d}t.
\end{equation}
By introducing a change of variable $x = \sqrt{{\text{Var}\left\{W|\mathcal{H}_0\right\}}}t+\mathbb{E}\left\{W|\mathcal{H}_0\right\} $, \eqref{566} can be rewritten as
\begin{align}\label{derivative}
\frac{\partial P_F(\lambda)}{\lambda} \!\! &= \!\!\sqrt{\frac{1/2\pi}{ \text{Var}\left\{W\!|\!\mathcal{H}_0\right\}} }  \frac{\partial }{\lambda} \int^{\infty}_{\lambda}\!\!\text{exp}\left(\!\!-\frac{\left(x\!-\!\mathbb{E}\left\{W\!|\!\mathcal{H}_0\right\}\right)^2}{2\text{Var}\left\{W\!|\!\mathcal{H}_0\right\}}\!\right)\text{d}x\nonumber\\
&=-\sqrt{\frac{1/2\pi }{\text{Var}\left\{W|\mathcal{H}_0\right\}} }\text{exp}\left(-\frac{\left(\lambda-\mathbb{E}\left\{W|\mathcal{H}_0\right\}\right)^2}{2\text{Var}\left\{W|\mathcal{H}_0\right\}}\right).
\end{align}
\begin{figure*}
\centering
\begin{minipage}[b]{.47\textwidth}
\centering
\includegraphics[width=3.15in]{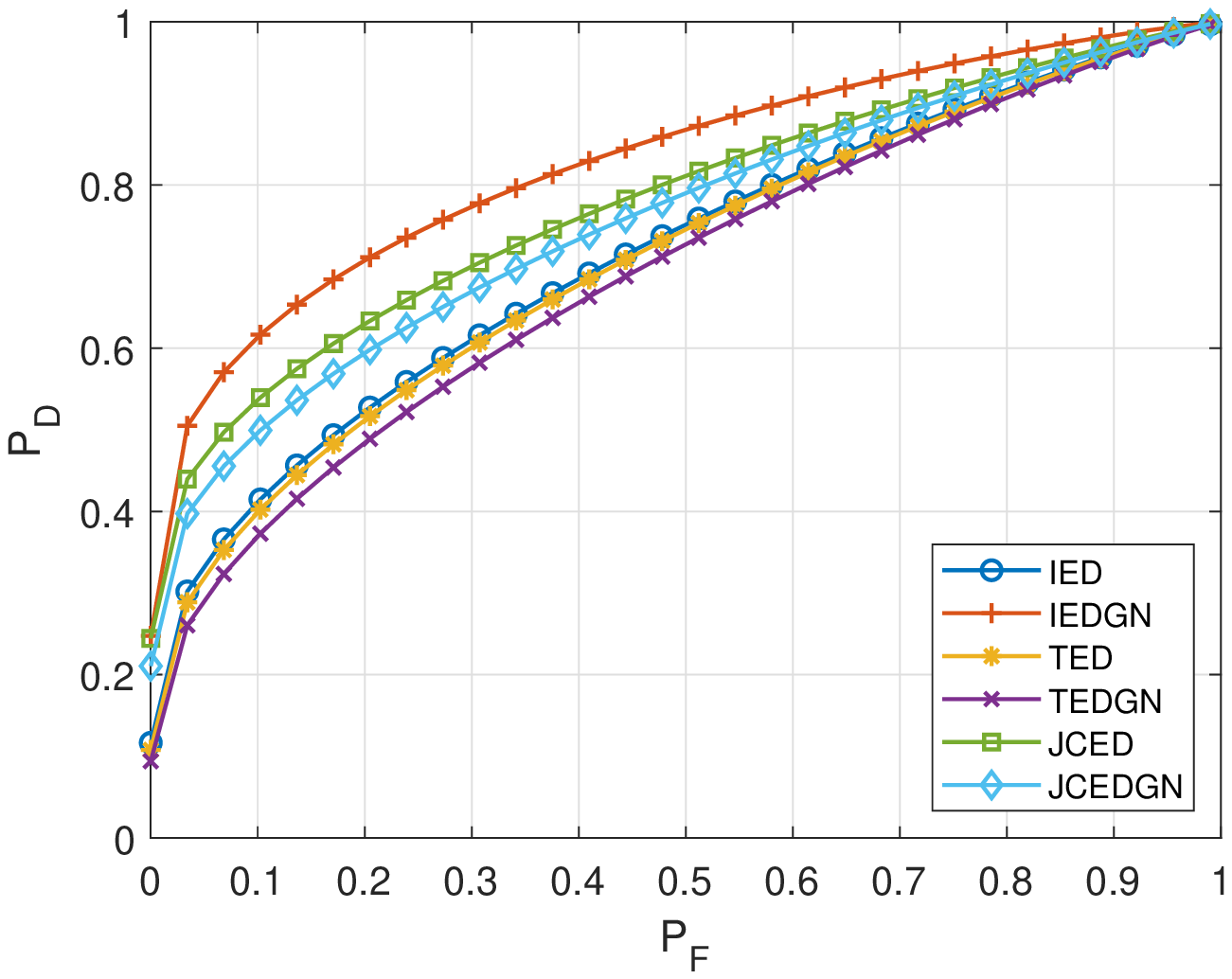}
\subcaption{With DIC technique} \label{ROC_0_1}
\end{minipage}\qquad
\begin{minipage}[b]{.47\textwidth}
\centering
\includegraphics[width=3.15in]{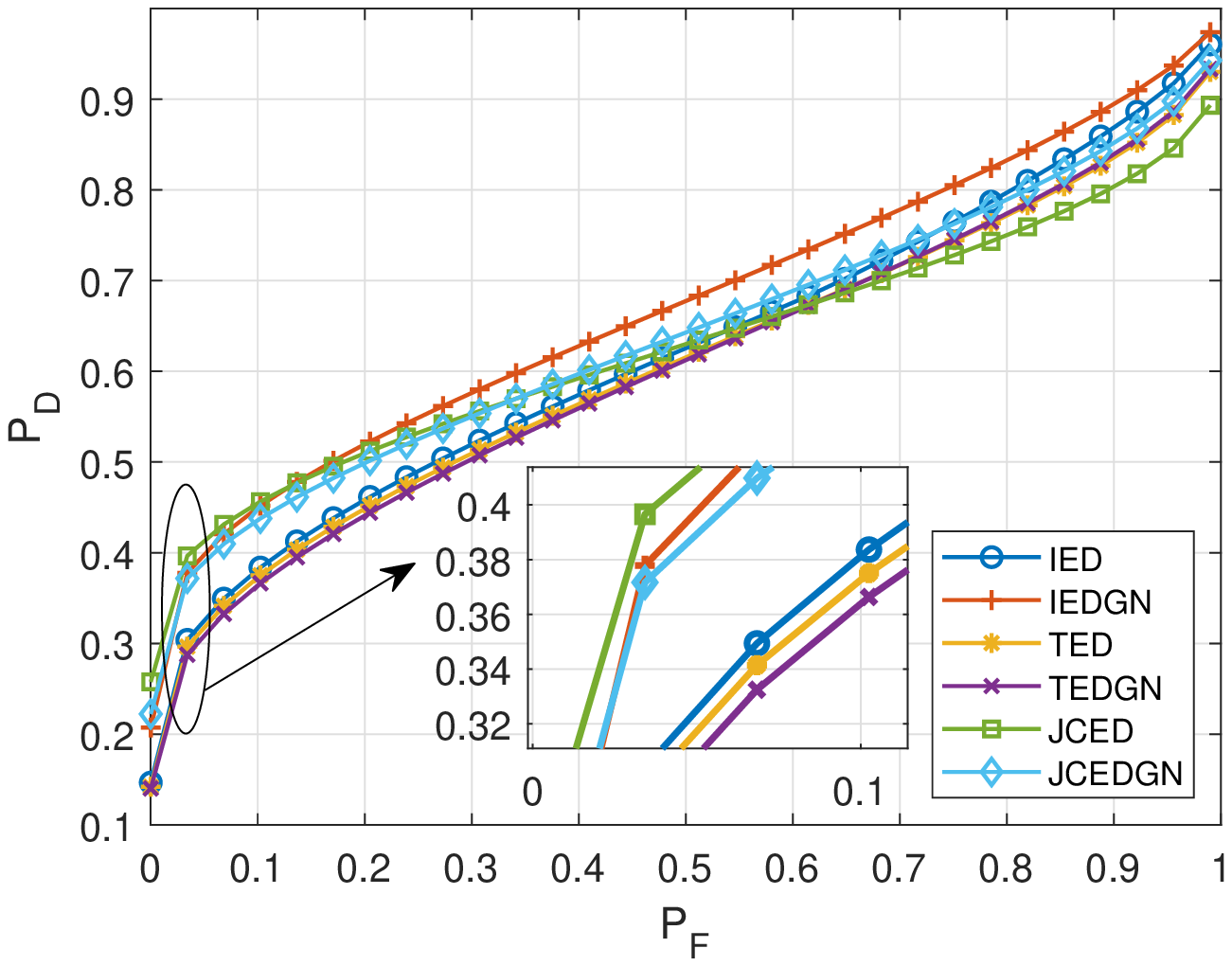}
\subcaption{Without DIC technique} \label{ROC_0_2}
\end{minipage}
\caption{A comparison of the ROC curves for different detectors with/without DIC technique when $P_s=10$ dBm and $M=1$.}
\label{ROC_0}
\end{figure*}
\begin{figure*}[t]
\centering
\begin{minipage}[b]{.47\textwidth}
\centering
\includegraphics[width=3.15in]{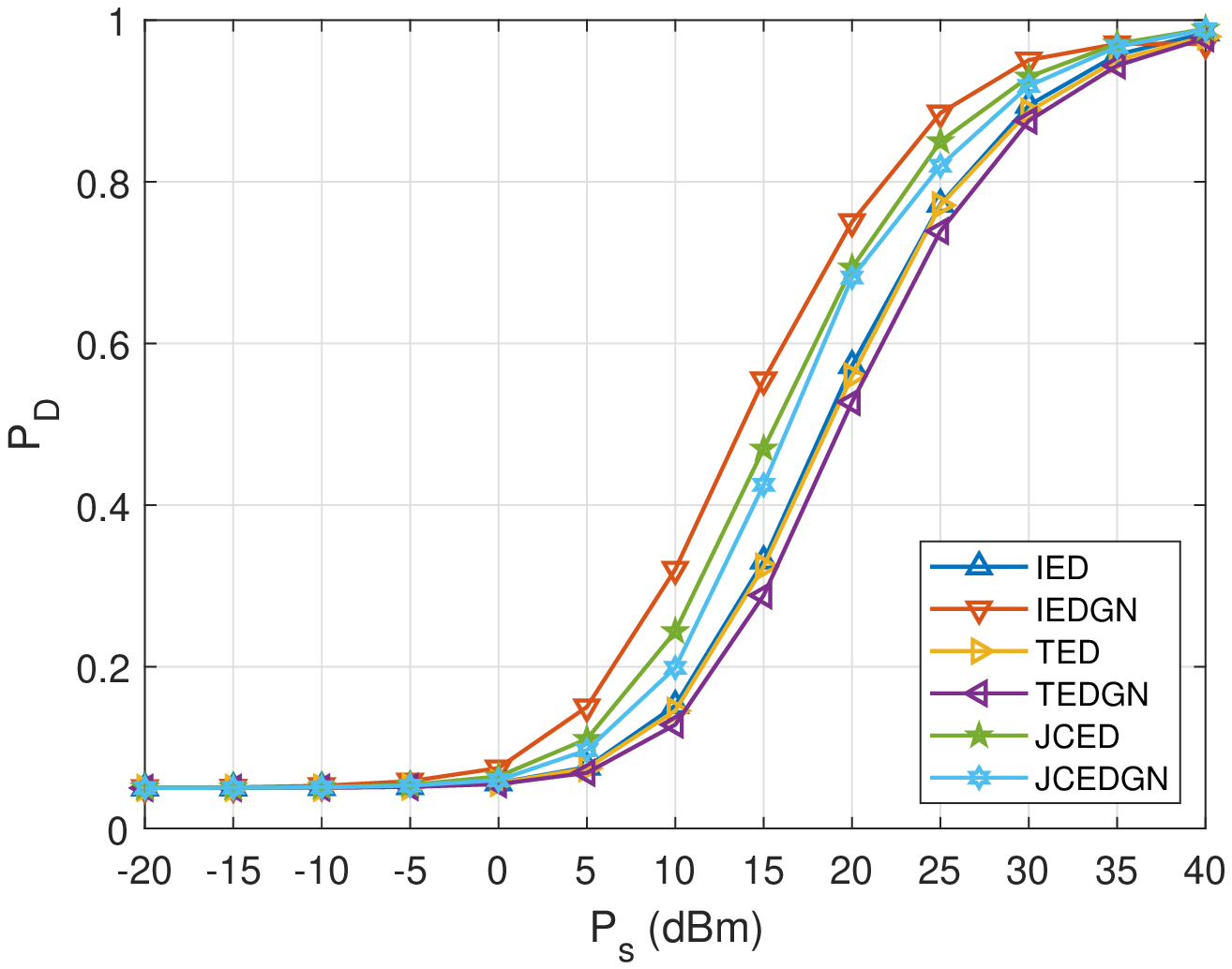}
\subcaption{With DIC technique.} \label{pd_snr_0}
\end{minipage}\qquad
\begin{minipage}[b]{.47\textwidth}
\centering
\includegraphics[width=3.15in]{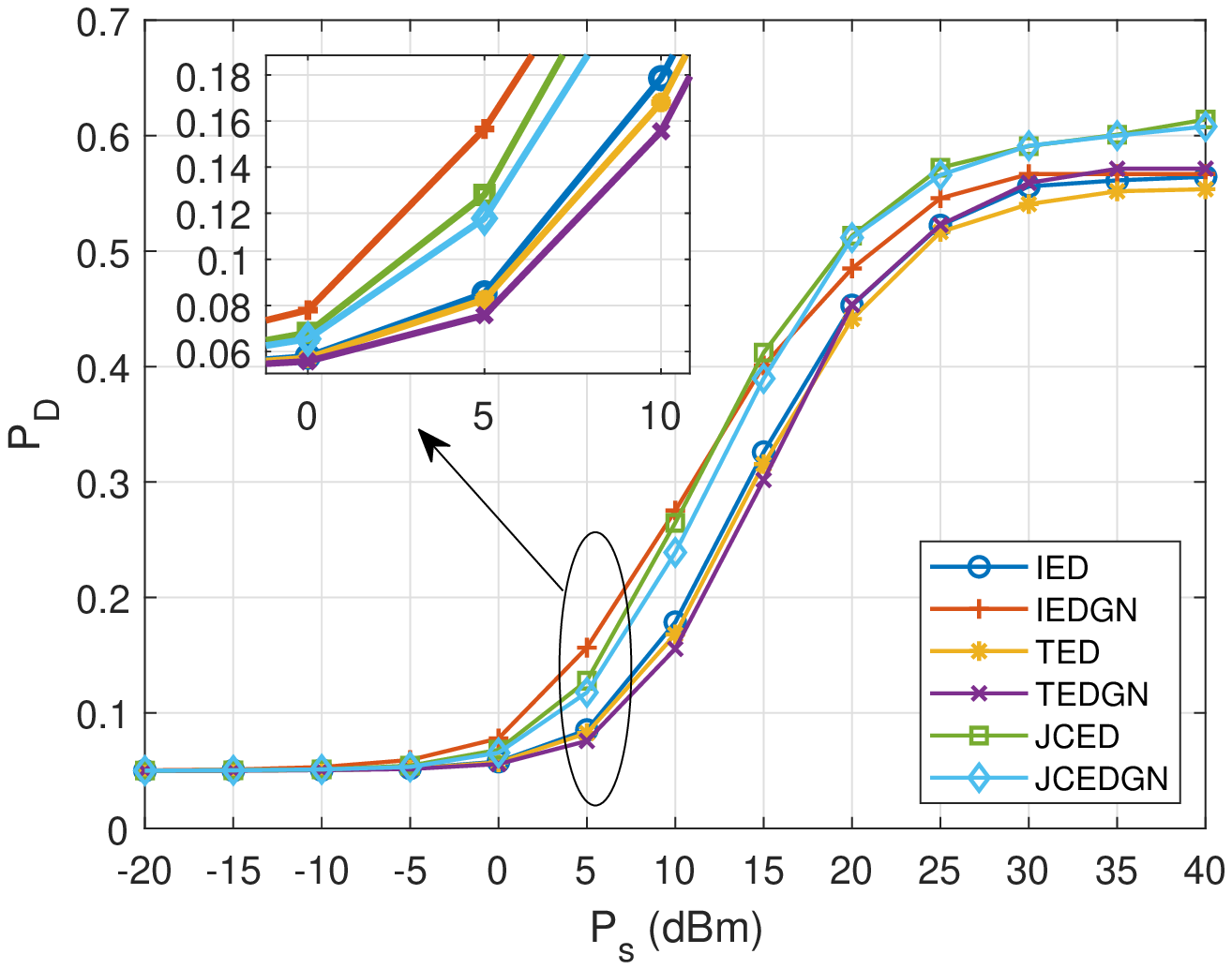}
\subcaption{Without DIC technique} \label{pd_snr_1}
\end{minipage}
\caption{Probability of detection vs. various $P_s$ regions for different detectors with/without DIC technique  when $M=1$.\vspace{-5mm}}
\label{pd_snr}
\end{figure*}
\begin{figure*}[t]
\begin{minipage}[h]{0.47\linewidth}
\begin{center}
\includegraphics[width=1\linewidth]{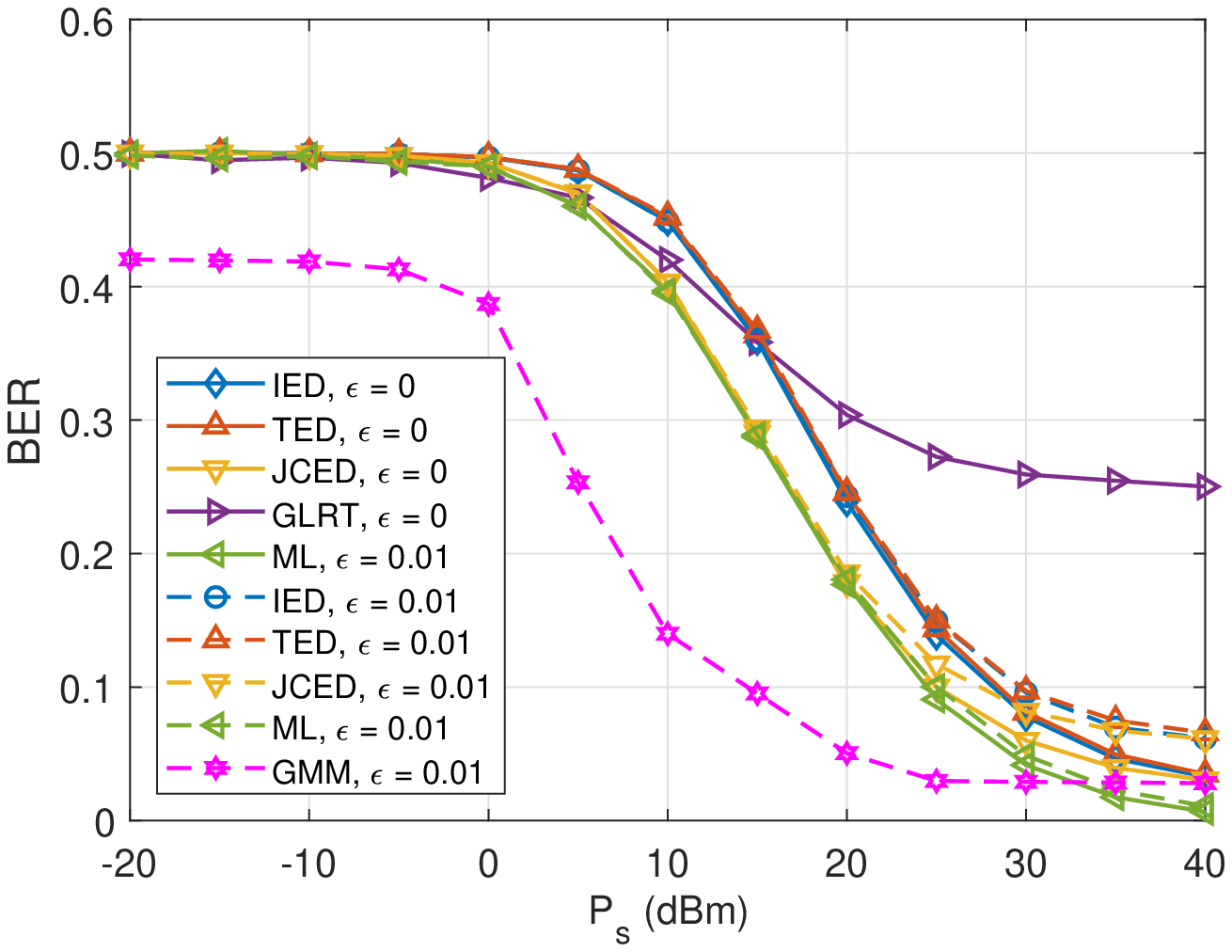} 
\subcaption{{With  DIC technique.} }
\label{BER_sic}
\end{center} 
\end{minipage}
\hfill
\vspace{0.2 cm}
\begin{minipage}[h]{0.47\linewidth}
\begin{center}
\includegraphics[width=1\linewidth]{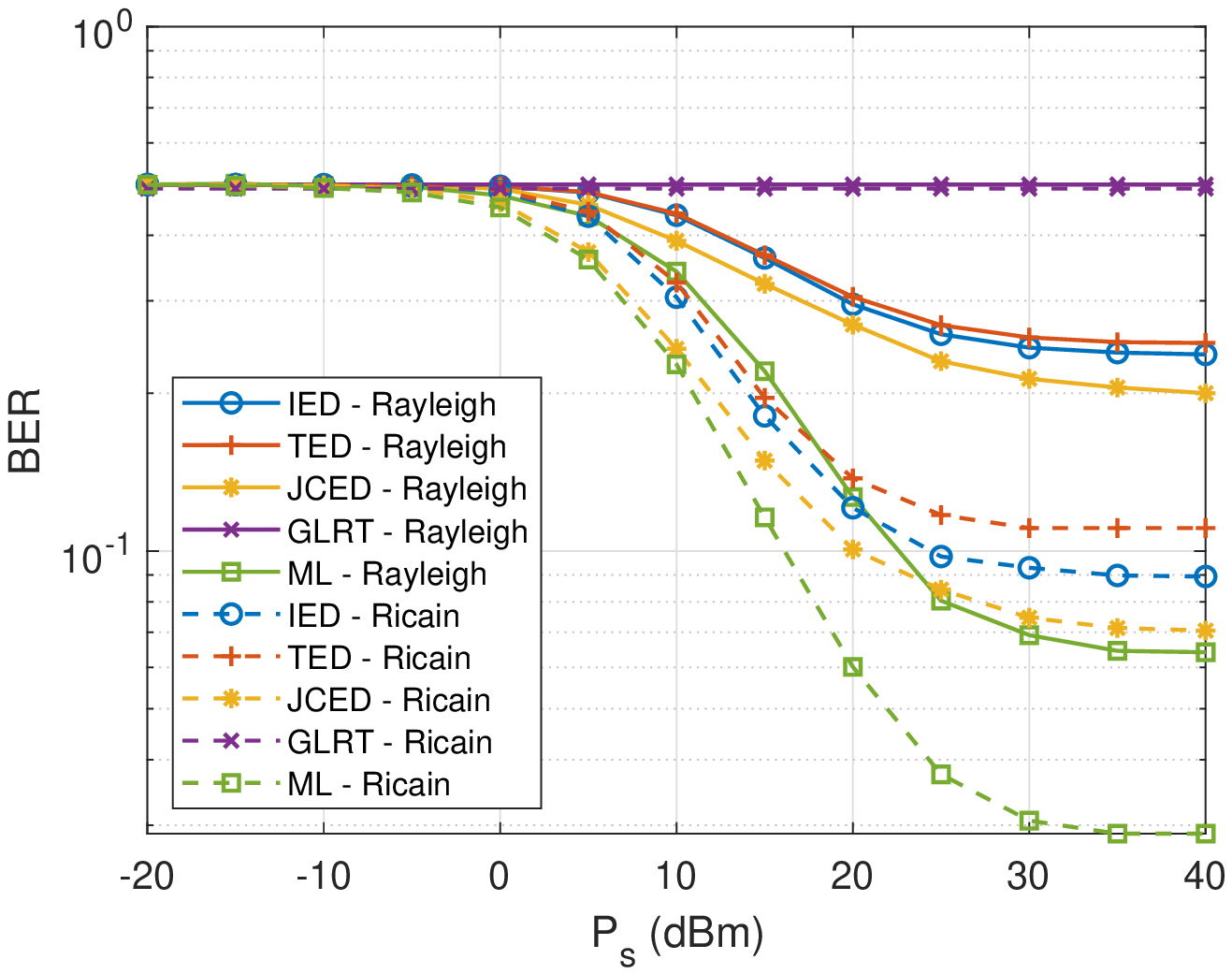} 
\subcaption{Without DIC technique.} \label{BER}
\end{center}
\end{minipage}
\caption{BER vs. various $P_s$ regions for different detectors with/without DIC technique, including RI  when $M=1$. \vspace{-5mm}} \label{BERR}
\label{ris}
\end{figure*}
Now putting \eqref{derivative} in \eqref{AUC} yield:
\begin{align}\label{AUC_gg}
A(\gamma_j) &=\sqrt{\frac{1/2\pi}{\text{Var}\left\{W|\mathcal{H}_0\right\}}}\int^{\infty}_{-\infty}\mathcal{Q}\left(\frac{\lambda-\mathbb{E}\left\{W|\mathcal{H}_1\right\}}{\sqrt{{\text{Var}\left\{W|\mathcal{H}_1\right\}}}}\right)\nonumber\\& \times \text{exp}\left(-\frac{\left(\lambda-\mathbb{E}\left\{W|\mathcal{H}_0\right\}\right)^2}{2\text{Var}\left\{W|\mathcal{H}_0\right\}}\right)\text{d}\lambda.
\end{align}
\begin{proposition}\label{propro_3}
The integral in \eqref{AUC_gg} is evaluated as $A(\gamma_j) = \mathcal{Q}\left(\frac{a}{\sqrt{b^2+1}}\right),$
where
\begin{align}
&a=\frac{\mathbb{E}\left\{W|\mathcal{H}_0\right\}-\mathbb{E}\left\{W|\mathcal{H}_1\right\}}{\sqrt{{\text{Var}\left\{W|\mathcal{H}_1\right\}}}}   \quad \text{and} \quad b=\sqrt{\frac{\text{Var}\left\{W|\mathcal{H}_0\right\}}{\text{Var}\left\{W|\mathcal{H}_1\right\}}}.   
\end{align}
\begin{proof}
Please refer to Appendix \ref{Pro_3}.
\end{proof}
\end{proposition}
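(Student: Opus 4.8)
The plan is to reduce the integral in \eqref{AUC_gg}, which superficially mixes a $\mathcal{Q}$-function and a Gaussian kernel, to a single standard-normal average of a $\mathcal{Q}$-function, and then to collapse that average with a standard Gaussian-convolution identity. First I would introduce the normalized variable $u=(\lambda-\mathbb{E}\{W|\mathcal{H}_0\})/\sqrt{\text{Var}\{W|\mathcal{H}_0\}}$, so that $d\lambda=\sqrt{\text{Var}\{W|\mathcal{H}_0\}}\,du$ and the kernel $\exp(-(\lambda-\mathbb{E}\{W|\mathcal{H}_0\})^2/2\text{Var}\{W|\mathcal{H}_0\})$ becomes simply $e^{-u^2/2}$. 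The prefactor $\sqrt{1/(2\pi\,\text{Var}\{W|\mathcal{H}_0\})}$ absorbed into $d\lambda$ then yields exactly the standard normal density $\frac{1}{\sqrt{2\pi}}e^{-u^2/2}$, while the argument of the $\mathcal{Q}$-function becomes affine in $u$ and equal to $bu+a$, with $a$ and $b$ as defined in the statement. Hence \eqref{AUC_gg} takes the compact form $A(\gamma_j)=\mathbb{E}[\mathcal{Q}(bU+a)]$ with $U\sim\mathcal{N}(0,1)$.

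Next I would evaluate this Gaussian average through the probabilistic representation $\mathcal{Q}(bu+a)=\Pr(V>bu+a)$ for an independent standard normal $V$. Averaging over $U$ gives $A(\gamma_j)=\Pr(V-bU>a)$. Since $V-bU$ is a linear combination of independent Gaussians, it is zero-mean Gaussian with variance $1+b^2$, so $\Pr(V-bU>a)=\mathcal{Q}(a/\sqrt{1+b^2})$, which is precisely the claimed closed form. Equivalently, one could substitute $\mathcal{Q}(x)=\frac{1}{\sqrt{2\pi}}\int_x^{\infty}e^{-t^2/2}\,dt$, interchange the order of integration (legitimate by Tonelli, the integrand being nonnegative), and complete the square in $u$ to collapse the inner Gaussian integral; both routes yield the same answer, and I would present the probabilistic one in Appendix \ref{Pro_3} for brevity.

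The delicate points here are bookkeeping rather than conceptual. I must track the sign conventions so the affine argument of $\mathcal{Q}$ reads exactly $bu+a$; note that $b>0$ automatically, being the square root of a ratio of positive variances, so no absolute-value subtleties intrude in the change of variable. I must also confirm that the leading constant and the Jacobian combine into a properly normalized density, leaving no stray factors of $2\pi$. The main obstacle is therefore the clean verification of the identity $\mathbb{E}[\mathcal{Q}(bU+a)]=\mathcal{Q}(a/\sqrt{1+b^2})$; the probabilistic argument above avoids the otherwise tedious completion-of-the-square and double-integral manipulation, and is the step I would carry out most carefully.
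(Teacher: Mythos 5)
Your proposal is correct, and its first half coincides exactly with the paper's: Appendix \ref{Pro_3} performs the same normalization $\hat{\lambda}=(\lambda-\mathbb{E}\{W|\mathcal{H}_0\})/\sqrt{\text{Var}\{W|\mathcal{H}_0\}}$ and arrives at the same reduced form $A(\gamma_j)=\mathbb{E}[\mathcal{Q}(b\hat{\lambda}+a)]$ with $\hat{\lambda}\sim\mathcal{N}(0,1)$. Where you diverge is in collapsing this Gaussian average. The paper expands the $\mathcal{Q}$-function as an inner integral, obtaining a double integral of a circularly symmetric bivariate Gaussian over the half-plane $t>\hat{\lambda}b+a$, and then rotates coordinates so that the boundary line becomes perpendicular to an axis; the perpendicular distance from the origin to that line, $a/\sqrt{b^2+1}$, emerges as the new lower integration limit, and the unit Jacobian plus the marginalization over the other rotated coordinate give $\mathcal{Q}(a/\sqrt{b^2+1})$. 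You instead write $\mathcal{Q}(bu+a)=\Pr(V>bu+a)$ for an independent $V\sim\mathcal{N}(0,1)$ and observe that $V-bU\sim\mathcal{N}(0,1+b^2)$, so $A(\gamma_j)=\Pr(V-bU>a)=\mathcal{Q}(a/\sqrt{1+b^2})$. The two arguments are mathematically the same fact — the paper's half-plane integral is literally $\Pr(V>bU+a)$, and the rotation is the geometric proof that a linear combination of independent Gaussians is Gaussian — but your probabilistic route is shorter, dispenses with the explicit coordinate transformation, Jacobian, and completion of the square, and makes the sign and normalization bookkeeping (which is in fact slightly garbled in the paper's intermediate displays, where $1/2\pi$ and $1/\sqrt{2\pi}$ prefactors are used inconsistently en route to the correct final answer) essentially automatic. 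The paper's geometric derivation, on the other hand, is self-contained for a reader who does not want to invoke the distribution of a Gaussian linear combination as a known lemma. Either version is a valid proof of Proposition \ref{propro_3}.
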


\section{Simulation Results}\label{sim_res}
We next present the performance of the proposed detectors and several benchmarks using Monte Carlo simulations. The following benchmark detectors are considered:  1) IED; 2) IED under generalized noise (IEDGN); 3) TED {\cite{Sudarshan_Guruacharya, Jing_Qian_4,Zeng_Tengchan,Kartheek_Devineni,Devineni_Kartheek_2}}; 4) TED under generalized noise (TEDGN); 5) JCED; 6) GLRT {\cite{Sudarshan_Guruacharya}}; 7) ML {\cite{Jing_Qian,Zargari10320395}}.

The generalized noise model assumes $q=1$, representing Laplace noise; all channel links experience Rayleigh flat fading. Large-scale fading describes the reduction of signal strength as it travels from one location to another, offering insights into the communication range. Factors influencing it include distance, operating frequency, and atmospheric and environmental conditions such as indoor or outdoor settings, urban or rural areas, and open or wooded environments. One fundamental model is the free space pathloss model, which is given by
\begin{equation}
L_{\text{path}} = 32.45 + 20 \log_{10}(d) + 20 \log_{10}(f),
\end{equation}
where $d$ (in km) denotes the distance and $f$ (in MHz) represents the operating frequency. 

The carrier frequency and bandwidth are $915$ MHz and $10$ MHz, respectively. The noise power is assumed to be $-174 \, \text{dBm/Hz}$ \cite{Diluka}. We assume that the transmitted primary signal is $s(n) = 1$ for simplicity. The RF  source antenna has a gain of $6 \, \text{dB}$, the reader antenna a gain of $3 \, \text{dB}$, and the tag antenna a gain of $2 \, \text{dB}$. In addition, we assume that the RF source is positioned $6 \, \text{meters}$ away from the tag and $4 \, \text{meters}$ away from the reader, while the tag is $0.5 \, \text{meters}$ away from the reader. These values are taken from a practical study  \cite{Kellogg}.  The false alarm probability is set at $P_F=0.05$. We also assume a tag reflection coefficient of $\xi=1$ and $N=512$ samples. Each simulation point is derived by averaging over $10^4$ Monte Carlo instances.

The ROC curves of the proposed detectors are compared to that of the TED for different $P_s$ values with and without  DIC in Fig. \ref{ROC_0}. DIC increases $P_D$ compared to no DIC, eliminating direct interference and improving detection accuracy. The optimized-$p$ based IED outperforms TED in all cases. The difference becomes more apparent when $P_F\leq 0.1$, relevant for practical applications with a lower false alarm rate. TED's performance is reduced in Laplacian noise, but TED performs similarly under AWGN and Laplacian at high transmit power regions. The absolute value detector (when $0<p\leq 1$) is best suited for Laplacian noise.  Thus,  IED performs best with Laplacian noise compared to other schemes, highlighting the importance of optimizing $p$ for different scenarios.  In Fig. \ref{ROC_0}, with a false alarm rate of $1\%$, JCED and IED have $22.97\%$ and $5.41\%$ higher $P_D$ than TED, respectively. With DIC, the gains are $34.92\%$ and $3.7\%$, respectively. 

The figure also shows the impact of the correlation between transmitted signal samples on the detection performance. Incorporating sample correlation enhances $P_D$ and allows JCED to outperform IED and TED, highlighting the superior performance achieved when adjacent received signal samples are correlated. Furthermore, it appears that the JCEDGN has a competitive performance compared to the other detectors. Furthermore, the ROC curve of the JCEDGN stands out prominently, indicating a superior $P_D$ relative to both TED and IED. When the DIC technique is applied, the JCEDGN  seems to perform similarly to JCED, indicating robustness in performance with or without DIC.

The performance of different detectors is presented in Fig. \ref{pd_snr} for different $P_s$ values. This depiction offers a comparative analysis of the detector efficiencies. A deeper dive into the data reveals nuanced performance disparities. For instance, the IED has a slight edge over the TED in scenarios with higher $P_s$ values. Both JCED and JCEDGN display notable resistance to Laplacian noise, enhancing their effectiveness in noisy environments and ensuring reliable data transmission. The JCEDGN maintains high detection sensitivity and accuracy compared to TED and IED, which is vital for systems operating in complex noise landscapes. However, it still has worse performance compared to the JCED. The newly introduced JCED is the top performer in this evaluation. The JCED capitalizes on an advanced algorithm that deftly leverages the correlation within the received signal samples. This feature becomes especially pivotal when managing dense signal environments, ensuring reduced error probabilities and a smoother data interpretation. One of the standout observations pertains to the debilitating effect of Laplacian noise on detectors, which is especially evident in the TED. Such noise introduces higher variance and skewness into the signal processing phase, diminishing the TED's accuracy and reliability. The spectral characteristics of this noise, characterized by its heavier tails, challenge TED's foundational algorithms. In stark contrast, both JCED and IEDGN demonstrate admirable adaptability in the face of Laplacian noise. Such inherent resistance makes them prime candidates for deployment in environments where Laplacian noise interference is anticipated, ensuring both data fidelity and transmission reliability.

Furthermore, in Fig. \ref{BERR}, the performance evaluation under varying channel models and RI strength is depicted.  The Rician fading channel model is given by
\begin{equation}
h = \sqrt{\frac{\kappa}{\kappa+1}} h^{\text{LoS}}_i + \sqrt{\frac{1}{\kappa+1}} h^{\text{NLoS}}_i,~i\in\{sr, tr, st\},
\end{equation}
where $\kappa=3$ is the Rician factor and $h^{\text{LoS}}_i=1$ is the deterministic LoS component that corresponds to the direct path between the transmitter and receiver, without any obstructions or scattering \cite{Zargari10320395}. Also, $h^{\text{NLoS}}_i$ is the non-LoS component that follows the Rayleigh fading model.

The integration of this model becomes indispensable in environments characterized by a pronounced LoS path, offering a stark contrast to the Rayleigh model, which predominantly encapsulates scenarios of multi-path propagation absent of a direct path. Due to the existence of the LoS component, the BER is significantly reduced. This reduction in BER is mainly due to reduced vulnerability to interference and fading in the Rician model's direct LoS path. This provides a resilient communication link, capable of mitigating diverse transmission impediments.

Moreover, as observed in Fig. \ref{BER_sic}, an increase in the RI strength correlates with a decline in BER performance. The results in both figures demonstrate that for $P_s$ values below $-10$ dBm, there is negligible performance difference between the proposed detectors and TED. However, as the $P_s$ increases, the performance of IED surpasses TED in both DIC and without DIC scenarios, highlighting the significance of optimizing parameter $p$  for IED. Moreover, JCED consistently outperforms both IED and TED regarding BER across all cases, indicating that incorporating the first-order correlation between received samples reduces BER.

Further, we assessed the performance of the  GLRT detector \cite{Sudarshan_Guruacharya}. The GLRT estimates channel parameters using maximum log-likelihood and then performs the hypothesis test but relies on more accurate knowledge of transmit signal statistics and exhibits degraded detection performance due to sensitivity to uncertainties. Lastly, we analyzed the performance of the ML detector \cite{Jing_Qian_4}, which outperforms all other schemes. {We also compared our proposed method against the GMM, which is an unsupervised learning algorithm {\cite{Qianqian_Zhang_1}}. It offers a superior BER performance compared to the other evaluated techniques.}  

\begin{figure}[t]
\centering
\includegraphics[width=3.15in]{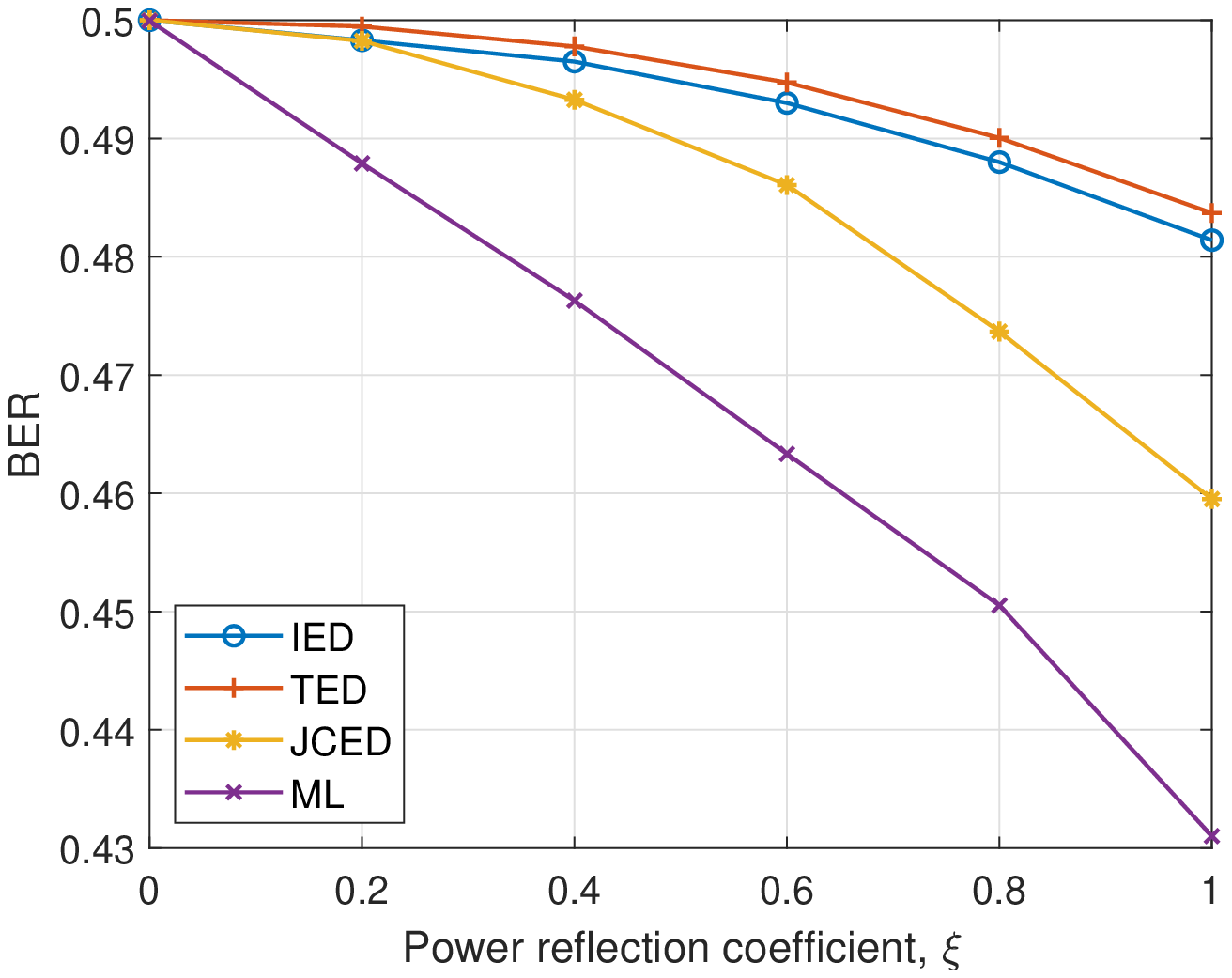}
\caption{BER versus reflection coefficient of the tag with fixed $P_s=5$ dBm  and $M=1$. \vspace{-5mm}} \label{BER_reflection_coff} 
\end{figure}

\begin{figure}
\centering
\includegraphics[width=3.15in]{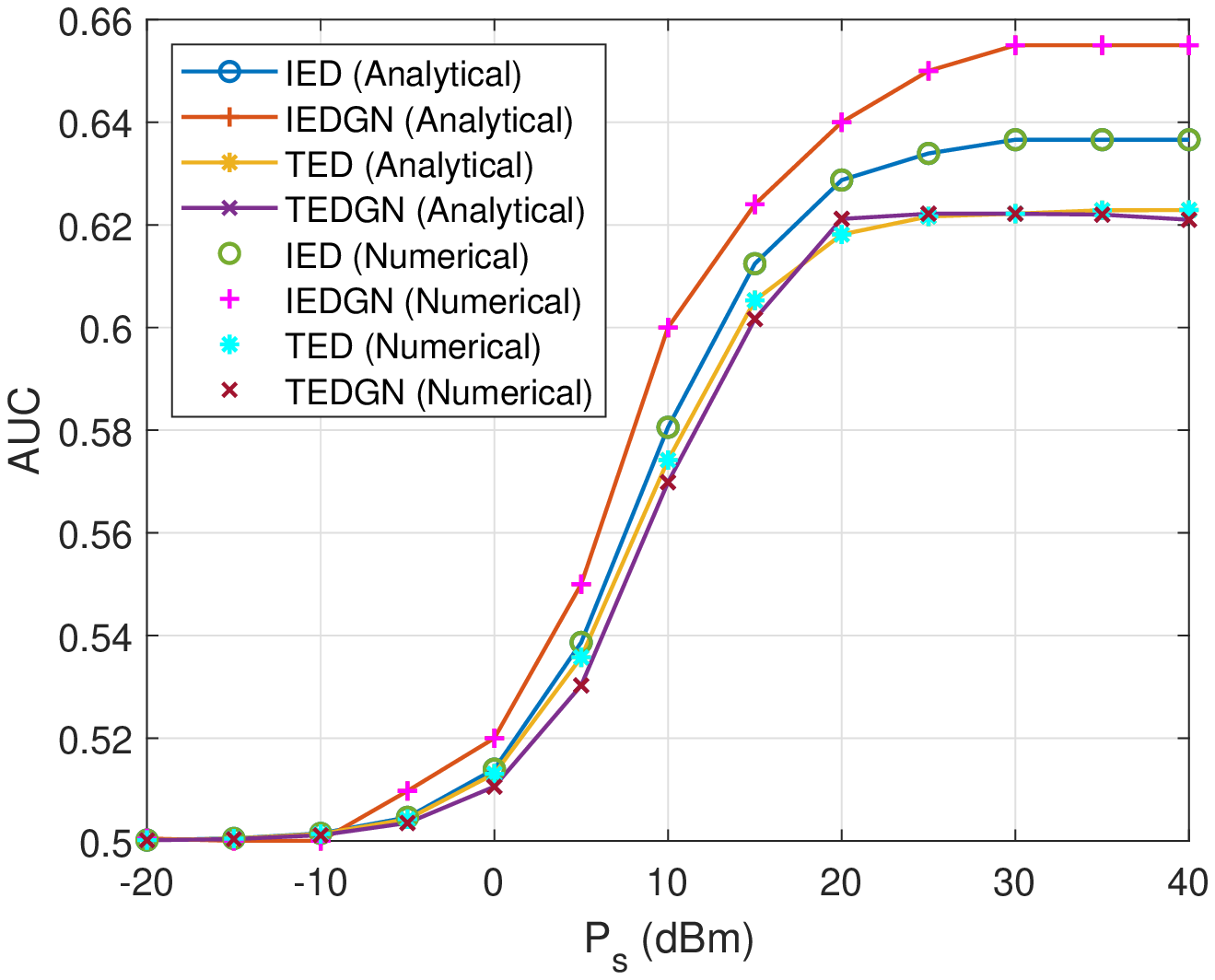}
\caption{AUC vs. various $P_s$ for different detectors when $M=1$. \vspace{-5mm} } \label{AUC_fig}
\end{figure}

Fig. \ref{BER_reflection_coff} demonstrates the influence of the power reflection coefficient, $\xi$, on the BER performance of various detection schemes. It is observed that as $\xi$ increases, there is a notable decrease in BER across all detection schemes. This improvement in BER is attributed to enhanced signal quality, a direct result of a higher proportion of the incident RF wave being reflected for data transmission. Enhanced signal reflection improves the detectors' ability to distinguish between transmitted symbols accurately. On the other hand, although a lower value of $\xi$ increases the energy available for EH, it reduces signal quality. This is evident in the figure where, at lower $\xi$ values, the performance of the detectors converges, reflecting the challenge in symbol distinction due to poorer received signal quality.

This figure highlights the influence of $\xi$ on BER, illustrating a critical trade-off between signal quality for reliable data transmission and sufficient EH to meet backscatter tag circuit sensitivity, often around $-20$ dBm \cite{Diluka,rfidworld2019}. A higher $\xi$ can reduce the energy available for EH, potentially leading to operational issues in tags due to power shortages. This intersection suggests an avenue for future research, particularly in developing advanced detection algorithms that optimize EH efficiency, possibly through machine learning to adapt to fluctuating RF environments, ensuring stable tag function and consistent data transmission.

\begin{figure}
\centering
\begin{subfigure}[b]{0.5\textwidth}
\centering
\includegraphics[width=3.15in]{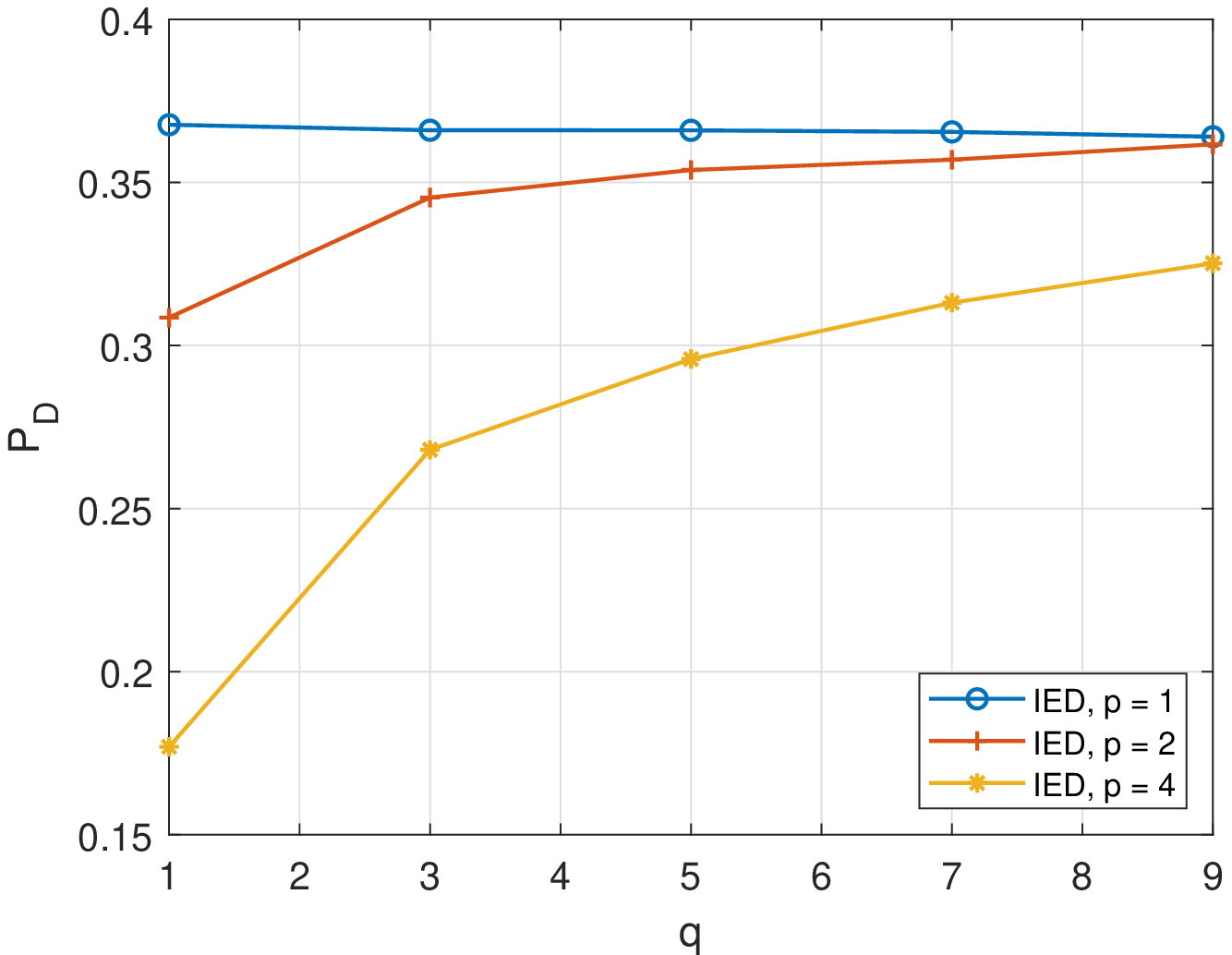}
\caption{Probability of detection vs. $q$ for IED under different values of $p$.}
\label{Pd_q}
\end{subfigure}
\vspace{\floatsep}
\begin{subfigure}[b]{0.5\textwidth}
\centering
\includegraphics[width=3.15in]{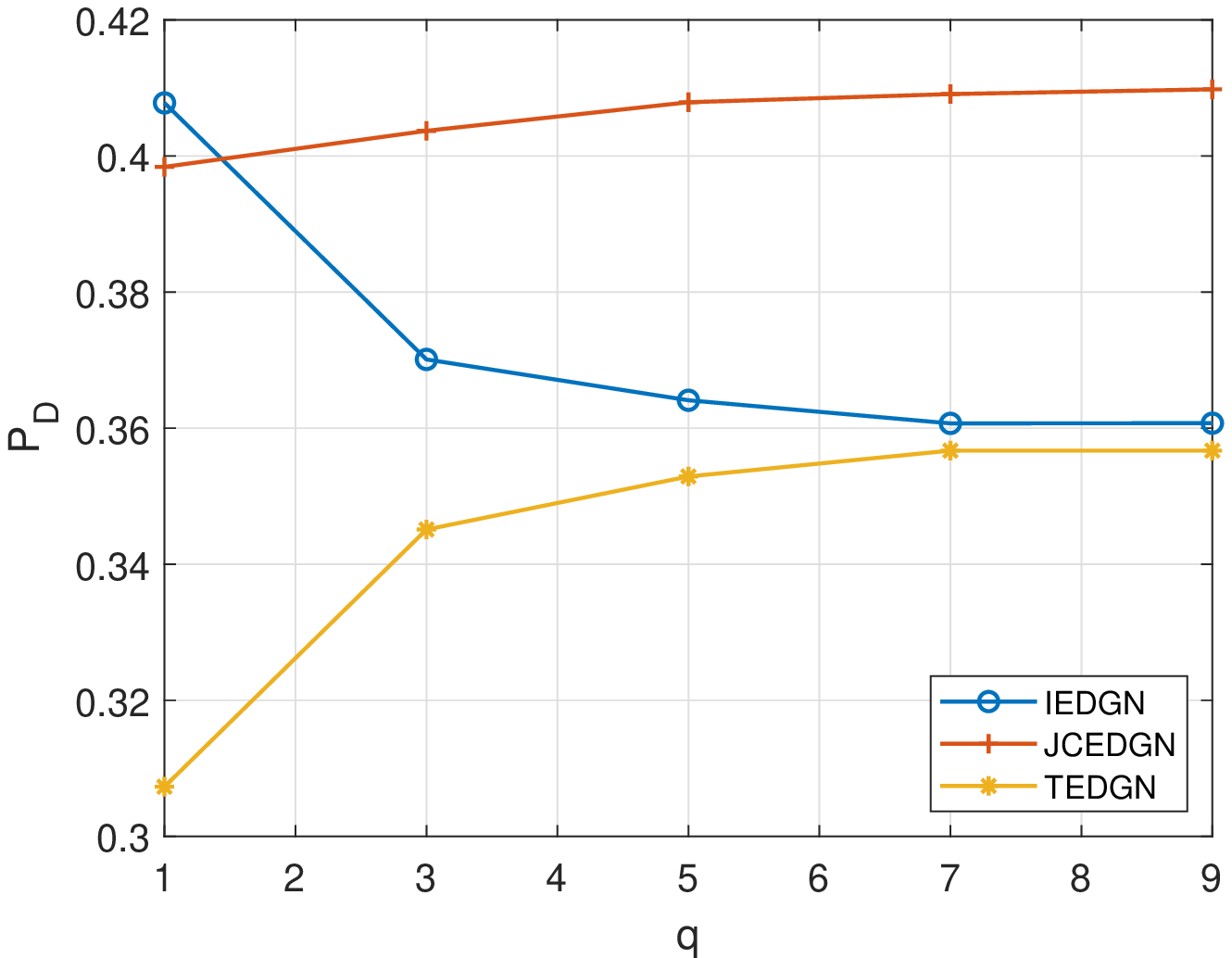}
\caption{Probability of detection vs. $q$ for different detectors.}
\label{Pd_q_1}
\end{subfigure}
\caption{Detection performance for various detectors' parameters and detectors when $P_s=15$ dBm and  $M=1$. \vspace{-5mm}} 
\label{Pd_q_2}
\end{figure}

Fig. \ref{AUC_fig} illustrates the AUC values obtained through analytical and numerical calculations for different detectors across various transmit power values. The IED surpasses the TED in maximizing the AUC across different $P_s$ values. Furthermore, the figure demonstrates that the IEDGN outperforms the IED under Laplacian noise. This indicates that the choice of the detector depends on the noise type and $P_s$ value, with the proposed IED exhibiting superior overall performance compared to the TED.

\begin{figure}
\centering
\begin{subfigure}[b]{0.5\textwidth}
\centering
\includegraphics[width=3.15in]{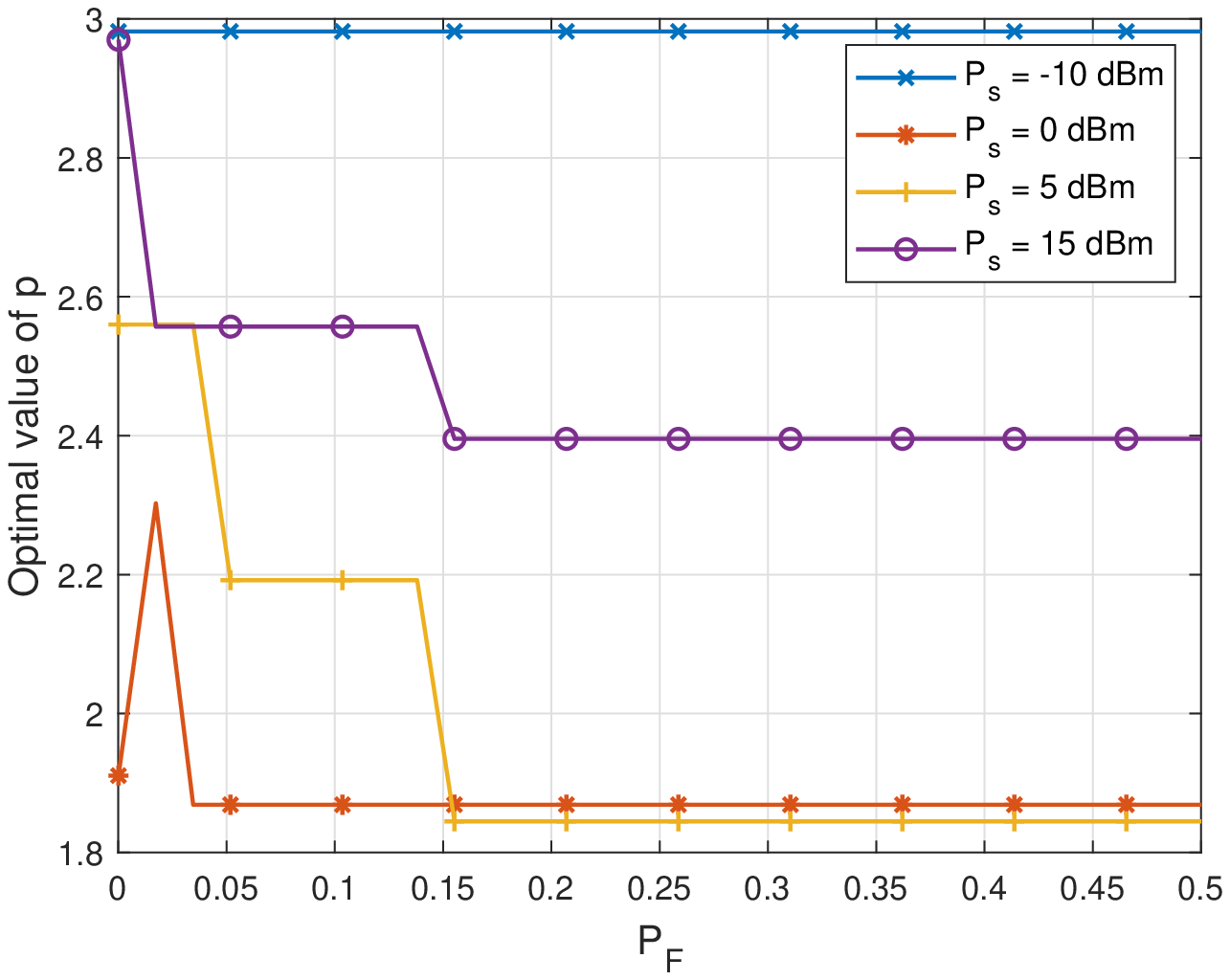}
\caption{$N=256$.}
\label{p_opt_1}
\end{subfigure}
\vspace{\floatsep}
\begin{subfigure}[b]{0.5\textwidth}
\centering
\includegraphics[width=3.15in]{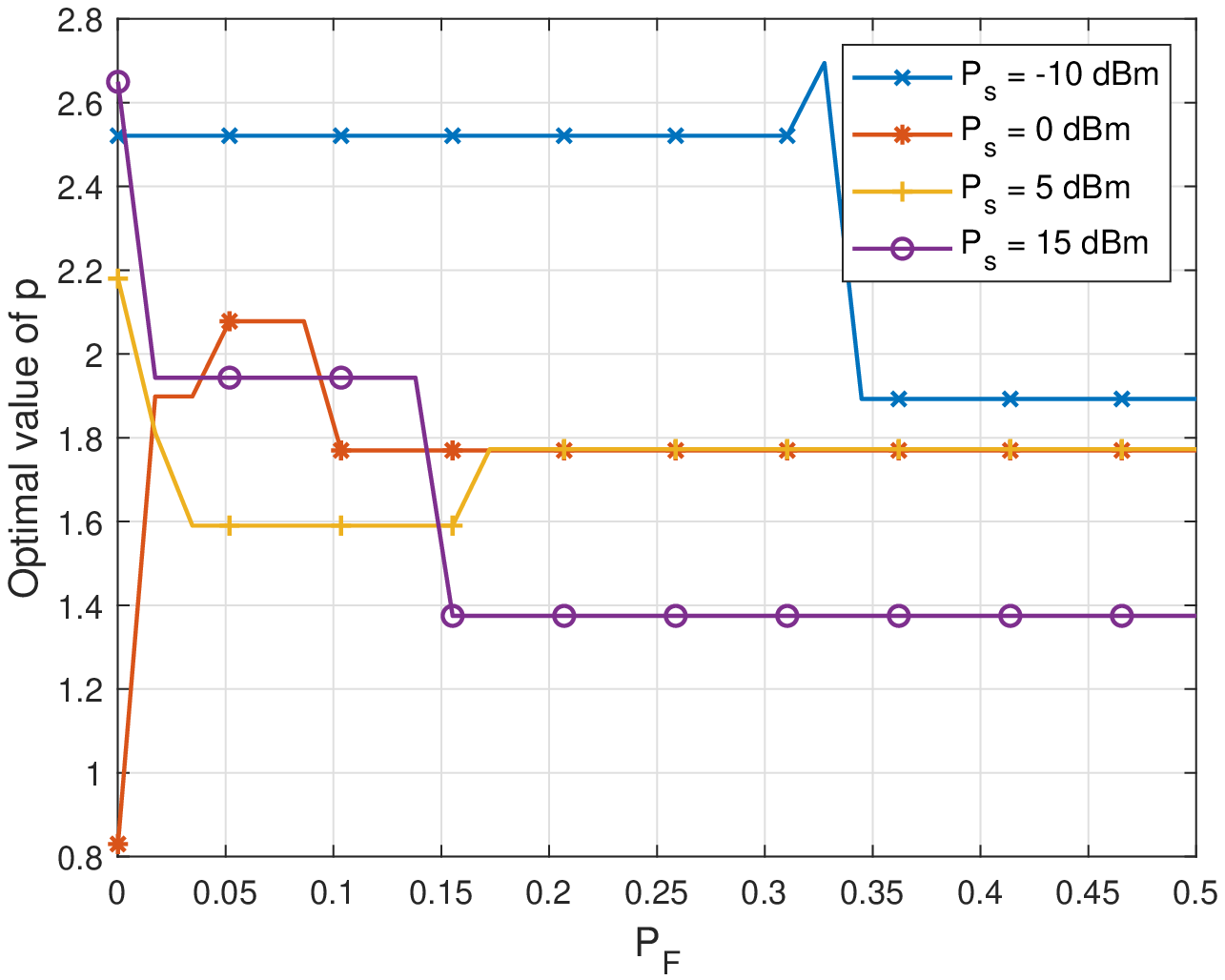}
\caption{$N=512$.}
\label{p_opt_2}
\end{subfigure}
\caption{The optimum value of $p$ vs. probability of false alarm for different fixed values of $P_s$ and $N$  when $M=1$. \vspace{-5mm}} \label{p_opt}
\end{figure}

Fig. \ref{Pd_q} illustrates that a reduction in the parameter $q$, which signifies a greater presence of impulsive noise, enhances detection efficiency at lower values of $p$. This enhancement is particularly pronounced as $q \rightarrow 0^+$, a condition indicative of highly non-Gaussian noise environments. Conversely, as $q$ increases, moving towards Gaussian noise characteristics, the advantageous impact of a smaller $p$ diminishes. Furthermore, Fig. \ref{Pd_q_1} presents the relationship between the probability of detection and the parameter $q$. The IEDGN demonstrates a high starting point at $q = 1$ but declines by the time $q$ reaches $3$. This pattern suggests an initial robustness against noise, gradually decreasing as the noise becomes more non-Gaussian. In contrast, the TEDGN initiates at a lower point but shows a stabilization trend as $q$ increases. Meanwhile, the JCEDGN remains relatively constant across different $q$ values, implying a resilience to variations in noise type, though it may not be as effective in purely Gaussian noise environments.

In Fig. \ref{p_opt}, the optimal value of $p$ is depicted as a function of $P_s$ and $N$ values. The calculation involves $1000$ evenly spaced points between $0.1$ and $3$. The results reveal that with an increasing $P_F$, the optimal $p$ value decreases at a nearly constant rate for high $P_F$ values. However, when $P_F$ is low, the optimal $p$ value decreases rapidly. This suggests that TED may not always correspond to the optimal $p$ value. It emphasizes the significance of selecting the appropriate $p$ value to maximize detection accuracy and minimize $P_F$. 

\begin{figure*}[t]
\begin{eqnarray}\label{con_0}
\text{Cov}(Z_1, Z_2) && = \mathbb{E}\left[\left( Z_1-\mathbb{E}(Z_1)\right)\left (Z_2-\mathbb{E}(Z_2)\right)^*\right]\\\nonumber
&&=\mathbb{E}\left[\left( \sum^{N-1}_{n=0}|y(n)|^2-\mathbb{E}\left( \sum^{N-1}_{n=0}|y(n)|^2\right)  \right) \left( \sum^{N-2}_{m=0}y^*(m+1)y(m)-\mathbb{E}\left( \sum^{N-2}_{m=0}y^*(m+1)y(m)\right)  \right)\right]\\\nonumber
&&=\mathbb{E}\left[\left( \sum^{N-1}_{n=0}\left[|y(n)|^2-\mathbb{E}\left( |y(n)|^2\right)  \right]  \right) \left( \sum^{N-2}_{m=0} \left[y^*(m+1)y(m)-\mathbb{E}\left( y^*(m+1)y(m)\right) \right]  \right)\right]\\\nonumber
&&=\sum^{N-1}_{n=0}\sum^{N-2}_{m=0}\mathbb{E}\left[\left( |y(n)|^2-\mathbb{E}\left( |y(n)|^2\right)   \right) \left(  y^*(m+1)y(m)-\mathbb{E}\left( y^*(m+1)y(m)\right)   \right)\right].
\end{eqnarray}	
\hrulefill
\end{figure*}

\begin{figure}[t]
\centering
\includegraphics[width=3.15in]{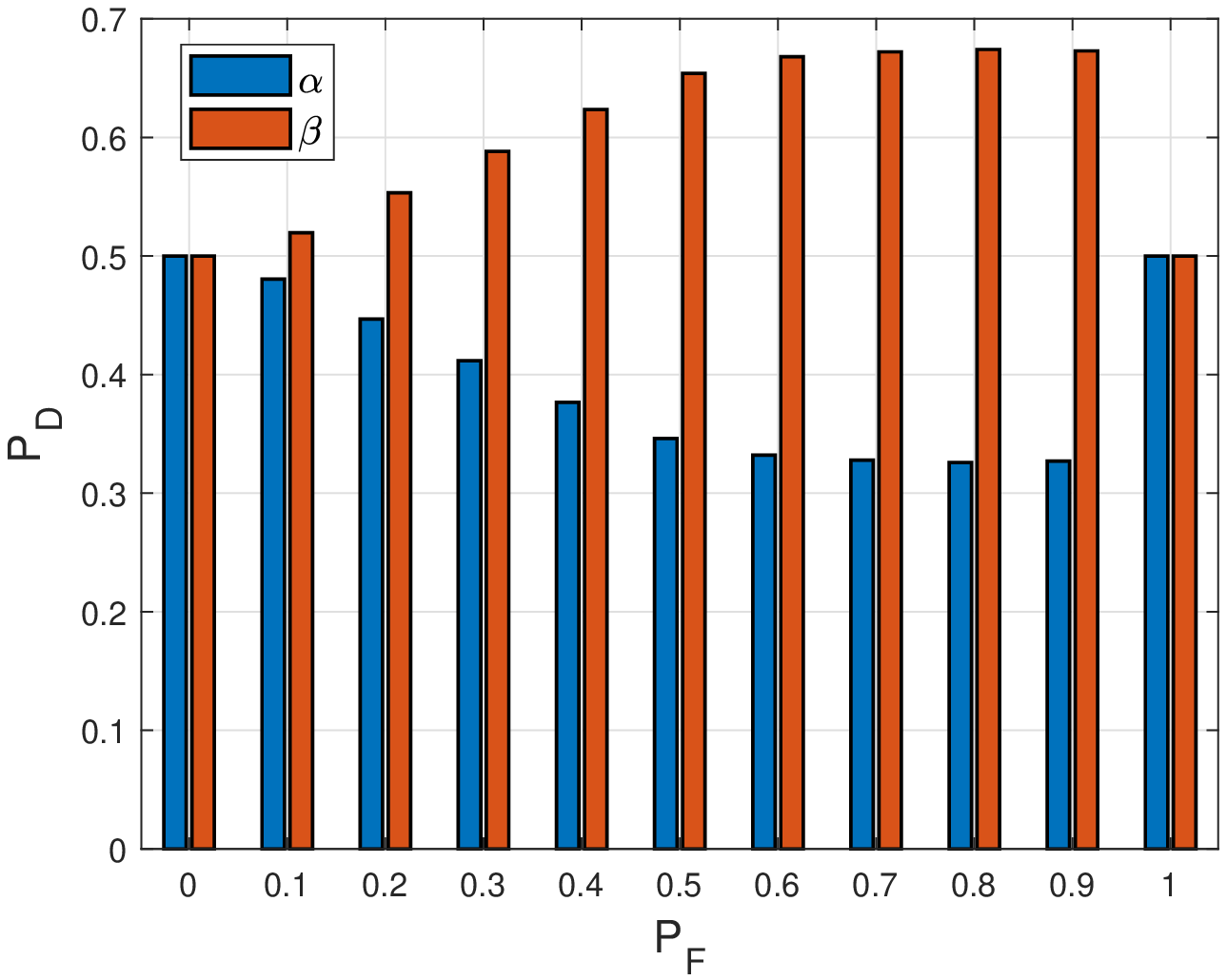}
\caption{The suboptimal value of $\alpha$ and $\beta$ vs. probability of false alarm for $P_s=10$ dBm  when $M=1$. }
\label{Opt_weights_JSEC}
\end{figure}

\begin{figure}
\centering
\begin{subfigure}[b]{0.5\textwidth}
\centering			\includegraphics[width=3.15in]{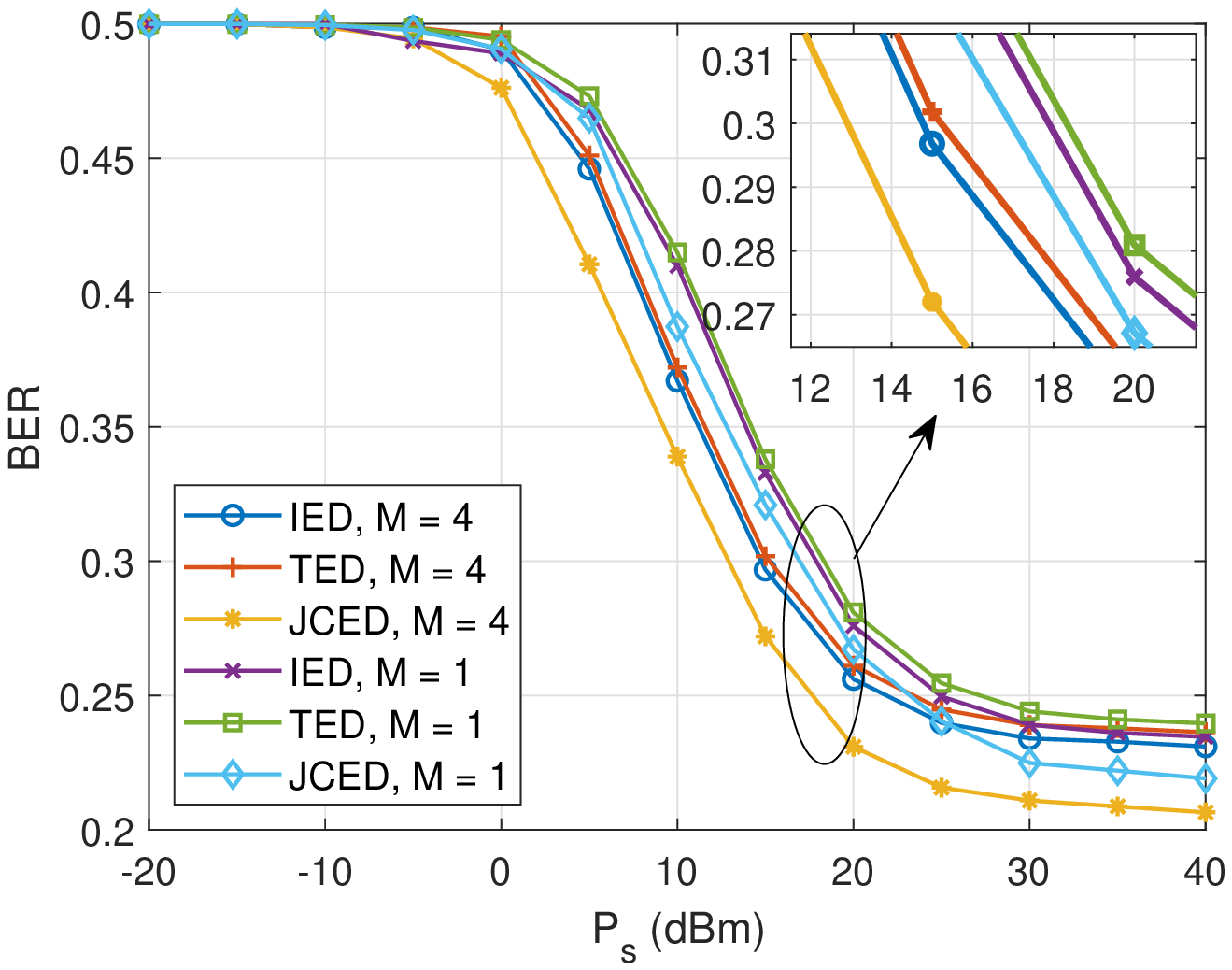}
\caption{BER vs. $P_s$ for different detectors without DIC technique.}
\label{BER_multi_fig1}
\end{subfigure}
\vspace{\floatsep}
\begin{subfigure}[b]{0.5\textwidth}
\centering
\includegraphics[width=3.15in]{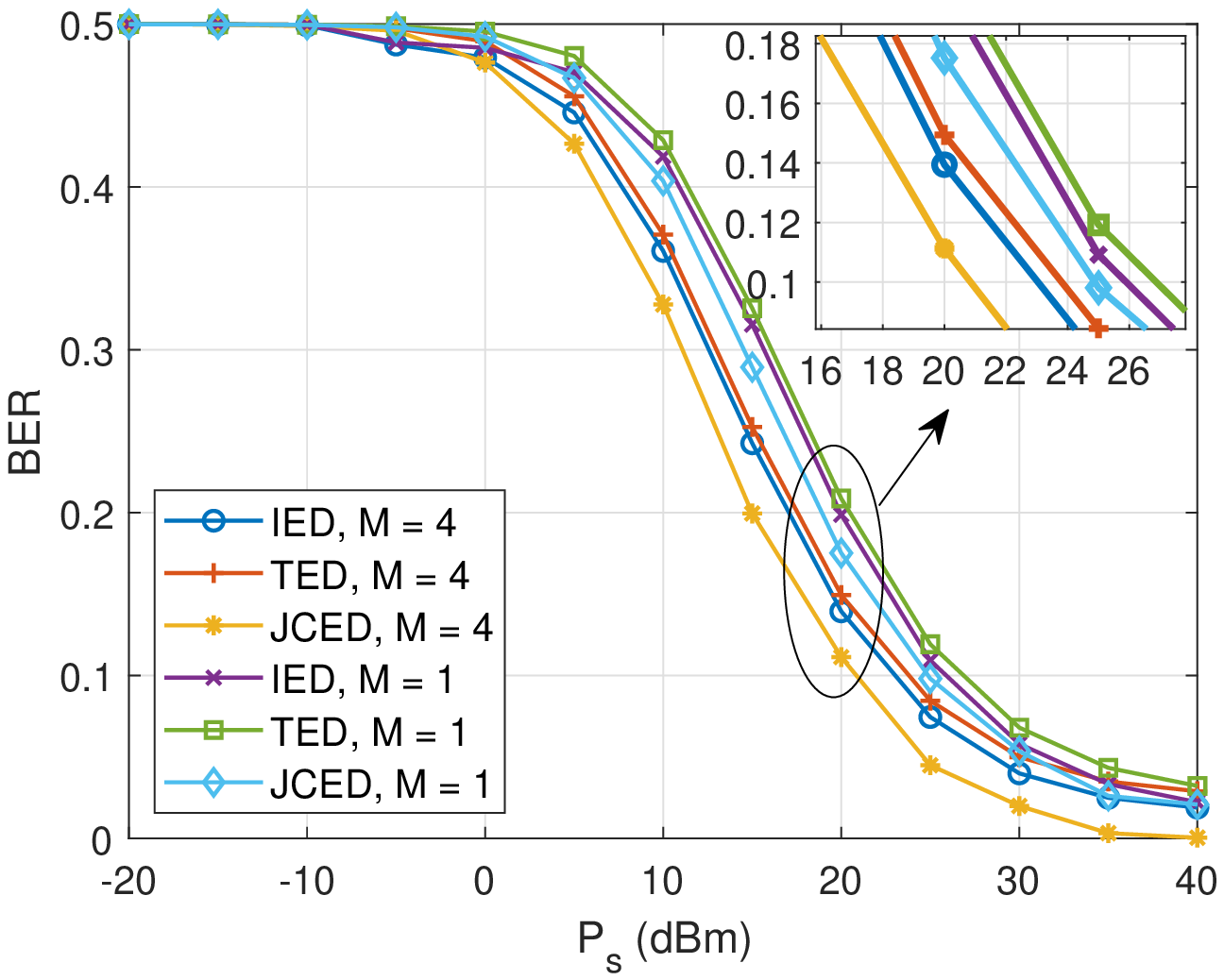}
\caption{BER vs. $P_s$ for different detectors with DIC technique.}
\label{BER_multi_fig2}
\end{subfigure}
\caption{Detection performance for different detectors when $M=4$. \vspace{-5mm}} 
\label{BER_multi_fig}
\end{figure}

\begin{figure}[t]
\centering
\includegraphics[width=3.15in]{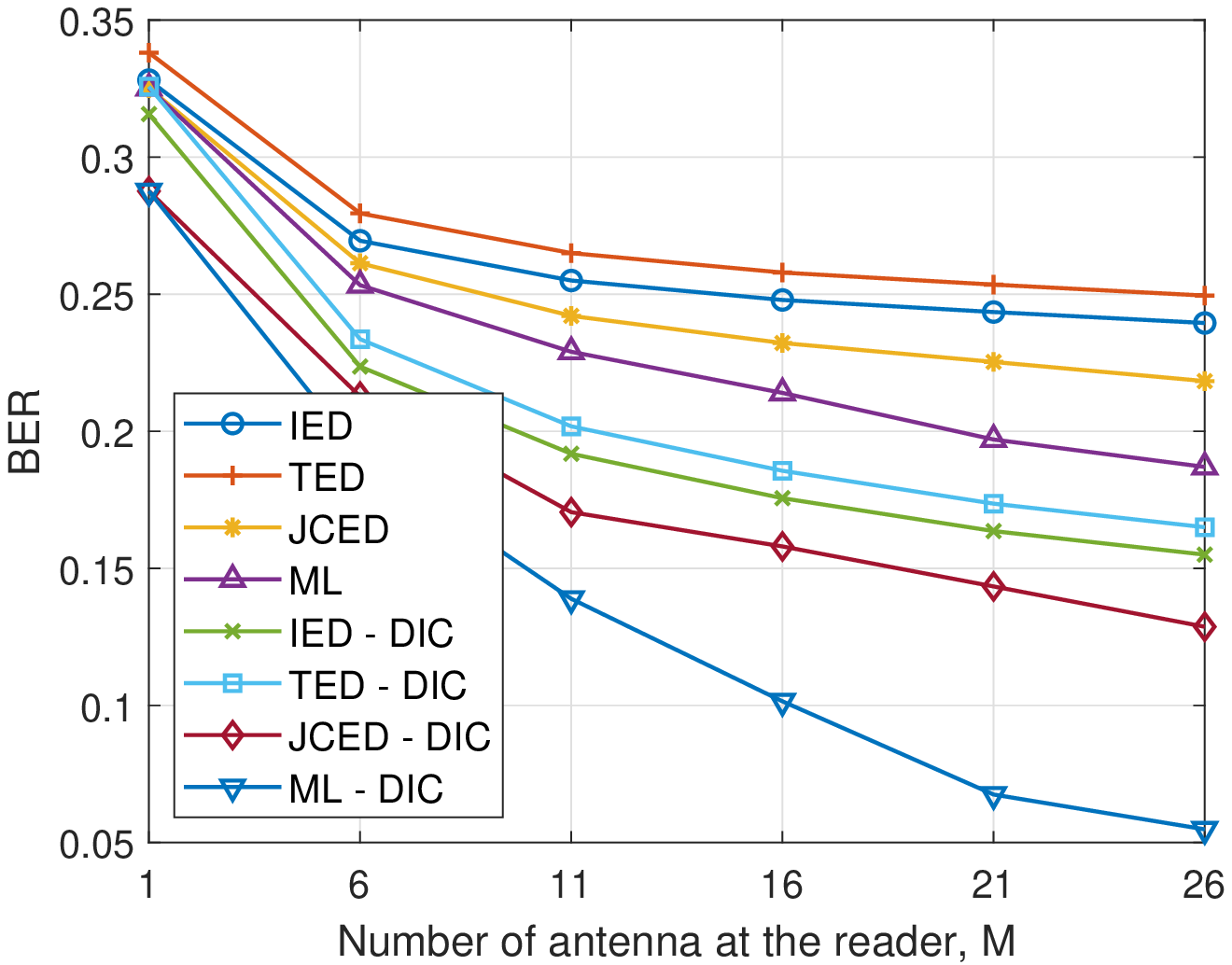}
\caption{BER vs. the number of reader antennas for different detectors for  $P_s=15$ dBm. } 
\label{antennafig}
\end{figure}

In Fig. \ref{Opt_weights_JSEC}, the suboptimal values of $\alpha$ and $\beta$ are plotted, maximizing $P_D$ vs. $P_F$ for the proposed JCED. The results provide insights into the role of $\alpha$ and $\beta$ in optimizing JCED's detection performance. For a low value of $P_F$, the suboptimal values suggest nearly equal weights assigned to both energy and correlation. Hence, the energy and correlation features contribute almost equally to lower false alarm rates. However, as $P_F$ increases, the value of $\beta$ becomes higher than $\alpha$, implying a higher weight for correlation in improving detection performance. 

In Fig. \ref{BER_multi_fig1}, we examine the efficacy of the proposed detectors as a function of transmit power.  The reader is equipped with $M=1$ and $M=4$ antennas. For $M=1$, the JCED outperforms IED and TED with a lower BER as $P_s$ increases. With $M=4$ antennas, all detectors achieve a lower BER across the $P_s$ regions, demonstrating the significant benefits of antenna diversity. The JCED consistently outshines both IED and TED, with the performance discrepancy becoming increasingly pronounced at higher $P_s$ values. For instance, with $P_s=15$ dBm and $M=4$, the JCED attains a BER gain of $28.68\%$. Correspondingly, the JCED exhibits a $14.71\%$ more favorable BER performance than the TED. Fig. \ref{BER_multi_fig2} presents the BER outcomes when employing the DIC technique. With just a single antenna, the BER reduction is quite substantial, emphasizing the pivotal role of DIC in curtailing interference. As the number of antennas increases to $M=4$, DIC leads to an even more marked BER reduction, especially at higher  $P_s$ levels. For example, when power $P_s$ is set to $15$ dBm and $M= 4$, the JCED achieves a BER improvement of $48.2\%$. In comparison, with the same settings, JCED shows a $27.9\%$ better BER performance than TED.

Fig. \ref{antennafig} displays the BER performance as a function of the number of antennas, $M$, at the reader, at a fixed $P_s$ of $15$ dBm. The descending BER  trend is evident as the number of antennas increases, indicating the positive impact of antenna diversity on signal detection accuracy. Both IED and TED show a gradual decrease in BER with more antennas, while JCED demonstrates a more pronounced reduction, signifying its superior performance in multi-antenna systems. Incorporating DIC enhances the performance of all detectors, with the most significant improvement observed in the JCED-DIC combination, which consistently outperforms the others across the range of antenna counts. For instance, without  DIC, the JCED exhibits a performance gain of $9.2\%$ compared to the IED. This improvement is observed for a range of antenna numbers from $M=6$ to $M=11$. When DIC is applied, the  JCED gain further escalates to $24.7\%$ compared to the IED for the same range of antennas.  Conversely, without  DIC, the IED records a gain of $5.9\%$ for antenna numbers extending from $M=6$ to $M=11$. However, upon the integration of DIC, the IED method demonstrates an enhanced gain, reaching $16.8\%$.

\section{Conclusion}
The performance of AmBC systems relies on the reader's ability to detect signals, which must recover data amidst interference and weak signals. To address these challenges, our study developed two innovative signal detectors that offer higher detection accuracy: (a) JCED and (b) IED. The first capitalizes on correlations between adjacent signal samples to improve the detection of weak signals. In contrast, the second utilizes non-linear signal combining to enhance accuracy, which is particularly effective in non-Gaussian noise environments.

Both proposed detectors show superior performance compared to the classical TED, especially under a multi-antenna reader setting. This opens new avenues for future studies to investigate the application of non-linear signal processing and to integrate machine learning techniques further to improve detection performance. Moreover, the concept of JCED can be extended to multiple-input and multiple-output (MIMO) systems.

\appendix

\subsection{ Proof of Proposition \ref{propro_2}}\label{Pro_1}

The correlation between two random variables $Z_1$ and $Z_2$ is defined in \eqref{con_0}. As a result, the last equality can be decomposed as given in \eqref{cov_2}.
\begin{figure*}[t]
\begin{eqnarray}\label{cov_2}
\hspace{-120mm} \text{Cov}(Z_1, Z_2)&& = \sum^{N-1}_{n=0}\sum_{\substack{m=0\\m\neq n\\m\neq n-1}}^{N-2}\mathbb{E}\left( |y(n)|^2-\mathbb{E}\left( |y(n)|^2\right)   \right)\underbrace{\mathbb{E} \left(  y^*(m+1)y(m)-\mathbb{E}\left( y^*(m+1)y(m)\right)   \right)}_{=\;0}\nonumber\\
&&+\sum_{\substack{n=0\\m= n}}^{N-2}\mathbb{E}\left( \left( |y(n)|^2-\mathbb{E}\left( |y(n)|^2\right)   \right)  \left(  y^*(n+1)y(n)-\mathbb{E}\left( y^*(n+1)y(n)\right)   \right) \right)\nonumber\\
&&+\sum_{\substack{n=1\\m= n-1}}^{N-1}\mathbb{E} \left(  \left( |y(n)|^2-\mathbb{E}\left( |y(n)|^2\right)   \right)  \left(  y^*(n)y(n-1)-\mathbb{E}\left( y^*(n)y(n-1)\right)\right)   \right) .
\end{eqnarray}	
\hrulefill
\end{figure*}
By substituting \eqref{received_signal_2} in \eqref{cov_2} under different hypotheses, we obtain
\begin{align}\label{cov_3}
&\text{Cov}(Z_1, Z_2)  = \sum^{N-2}_{n=0}\mathbb{E}\left( |h_js(n)+w(n)|^2-\left( |h_js(n)|^2+\sigma_w^2\right)   \right)\nonumber \\
&\left( w(n)h^*_js^*(n+1)+w(n+1)h^*_js^*(n)+w(n)w^*(n+1)  \right)\nonumber \\
&+\sum^{N-1}_{n=1}\mathbb{E}\left( |h_js(n)+w(n)|^2-\left( |h_js(n)|^2+\sigma_w^2\right)   \right)  \nonumber \\
&\left( w(n-1)h^*_js^*(n)+w^*(n)h_js(n-1)+w^*(n)w(n)  \right),
\end{align}
where $j \in\{0, 1\}$. Using the fact that for $w(n)\sim \mathcal{CN}(0,\sigma_w^2)$, we have $ 
\mathbb{E}\left(w^2(n)\right)=\mathbb{E}\left({w^*}^2(n)\right)=\mathbb{E}\left(|w(n)|^2w(n)\right)=\mathbb{E}\left(|w(n)|^2w^*(n)\right)=0$, \eqref{cov_3} can be further simplified into
\begin{align}
&	\text{Cov}(Z_1, Z_2)= \sum^{N-2}_{n=0}\mathbb{E}\left(|w(n)|^2|h_j|^2s(n)s^*(n+1)  \right)\nonumber\\&+\sum^{N-1}_{n=1}\mathbb{E}\left(|w(n)|^2|h_j|^2s^*(n)s(n+1)  \right)\nonumber\\
&=2\sigma_w^2|h_j|^2\sum^{N-2}_{n=0}s(n)s^*(n+1) =2\sigma_w^2|h_j|^2 R_{ss}(1).
\end{align}
Thus, the proof is complete.

\subsection{ Proof of Proposition \ref{propro_1}}\label{app_2}
To begin with, we decompose $w$ into a product of the squared roots of independent  Gamma and a CSC Gaussian random variables,  i.e., $w=\sqrt{G}C$, where  
\begin{equation}\label{pdf_g}
f_G(g)=\frac{q^q}{\Gamma(q)}g^{q-1}e^{-qg},~g\in \mathbb{R}^{+},
\end{equation}
and $C\sim\mathcal{CN}(0,\sigma_w^2)$. Accordingly, the received signal at the reader under different hypotheses, $j \in\{0, 1\}$, can be decomposed as follows:
\begin{equation}\label{yyyy}
y=h_js(n)+w(n)=  \underbrace{h_js(n)}_{x_1}  + \underbrace{w(n)}_{x_2}, 
\end{equation}
where $\Pr(x_1|{h_j,s(n)})\sim\mathcal{CN} (0,~|h_j|^2P_s)$ and $\Pr(x_2|{G})\sim \mathcal{CN} (0, G\sigma_w^2)$
using the fact that $\{h_j, s(n), G, X\}$ are all mutually independent random variables. As can be observed, \eqref{yyyy} is the sum of two Gaussian random variables. By introducing the auxiliary variable $v:= x_1+ x_2$, we have the following conditional probability:
\begin{align}
\Pr\left(v\big|h_j, s(n), G\right)& \sim 
\mathcal{CN}(0,|h_j|^2P_s+G\sigma_w^2) \nonumber\\&\overset{a}{=}\left(|h_j|^2P_s+G\sigma_w^2\right)^{\frac{1}{2}}  \underbrace{ \mathcal{CN}(0,1)}_{x_3}.
\end{align}
For step (a), the variance of the Gaussian random variable is extracted to obtain the probability distribution using the moment-generating function. Accordingly, the absolute moments of $v$ can be expressed as
\begin{equation}\label{absolut_value}
\mathbb{E}\left[|v|^{p}\right]=\mathbb{E}\left[\left(|h_j|^2P_s+G\sigma_w^2\right)^{\frac{p}{2}}\right]\mathbb{E}\left[|x_3|^{p}\right].
\end{equation}
Consequently, $|x_3|$ is Rayleigh distributed with PDF  \cite[Eq. (2.17)]{simon2002probability}: $f_{|X_3|}(x_3)=2x_3\exp\left(-x_3^{2}\right),~x_3\geq 0$. Therefore, the $p$-th order moment of $|x_3|$ is given by \cite[Eq. (2.15.5.4)]{prudnikov2003integrals}: $\mathbb{E}\left[|x_3|^{p}\right] = \Gamma\left(\frac{p}{2}+1\right)$. To calculate the remaining factor of \eqref{absolut_value}, we average out the Gamma distributed parameter $G$. The first step is to perform the following transformation \cite[Eq. (8.4.2.5)]{prudnikov2003integrals}:
\begin{align}\label{ccccc}
\left(|h_j|^2P_s+G\sigma_w^2\right)^{\frac{p}{2}}&=\frac{\left({|h_j|^2P_s}\right)^{\frac{p}{2}}}{\Gamma(-\frac{N}{2})}\MeijerG{1,1}{1,1}{\frac{p}{2}+1 \\ 0 } { \frac{G}{\gamma_j} },
\end{align}
where it can be applied to any number $p$ other than positive even numbers greater than or equal to $2$. Since when $p = \{2, 4, 6,  \ldots\}$, \eqref{ccccc} always returns singularity in its denominator. Then, using \eqref{pdf_g}, \eqref{ccccc} and exploiting \cite[Eq. (2.24.3.1)]{prudnikov2003integrals}, it yields
\begin{align}\label{varaince_genralized_3}
\mathbb{E}\left[\left|h_js(n)+w(n)\right|^{p}\Big|{\mathcal{H}_{j}}\right]&=\frac{\left({|h_j|^2P_s}\right)^{\frac{p}{2}}}{\Gamma(q)\Gamma(-\frac{p}{2})} \nonumber\\& \times \MeijerG{1,2}{2,1}{ 1-q,~\frac{p}{2}+1 \\ 0 } { \frac{1}{\gamma_j q} }.
\end{align}
Thus, the proof is completed. For the alternative case where $p\geq 2$ is an even integer, \eqref{varaince_genralized_3} can be relaxed to  \eqref{varaince_genralized_2} using the binomial expansion.

\subsection{ Proof of Proposition \ref{propro_3}}\label{Pro_3}
Let us simplify the integral in \eqref{AUC_gg} by introducing a change of variable as follows:
\begin{align}
\hat{\lambda}&=\frac{\lambda-\mathbb{E}\left\{W|\mathcal{H}_0\right\}}{\sqrt{{\text{Var}\left\{W|\mathcal{H}_0\right\}}}}, \quad a=\frac{\mathbb{E}\left\{W|\mathcal{H}_0\right\}-\mathbb{E}\left\{W|\mathcal{H}_1\right\}}{\sqrt{{\text{Var}\left\{W|\mathcal{H}_1\right\}}}},\nonumber\\ b&=\sqrt{\frac{\text{Var}\left\{W|\mathcal{H}_0\right\}}{\text{Var}\left\{W|\mathcal{H}_1\right\}}}.   
\end{align}
In this way, \eqref{AUC_gg} can be rewritten as follows:
\begin{equation}\label{AUC_1}
A(\gamma_j) = \frac{1}{2\pi}\int^{\infty}_{-\infty}\mathcal{Q}(\hat{\lambda} b +a)\text{exp}\left(\frac{-\hat{\lambda}^2}{2}\right)\text{d}\hat{\lambda}.
\end{equation}
Based on the definition of $\mathcal{Q}$ function, \eqref{AUC_1} can be expressed as
\begin{equation}\label{AUC_2}
A(\gamma_j) = \frac{1}{2\pi}\int^{\infty}_{-\infty}\int^{\infty}_{\hat{\lambda} b +a} \text{exp}\left(-\frac{t^2}{2}-\frac{\hat{\lambda}^2}{2}\right)\text{d}t\;\text{d}\hat{\lambda}.
\end{equation}
The exponential function is doubly integrated over the $(t,\hat{\lambda})$ plane. As a result,  a rotation of the coordinates will simplify the integral evaluation.  Let  $\theta$ refer to the angle between $t = \hat{\lambda} b +a$ and the positive $x$-axis. The  following transformations generate a new coordinate system: $\hat{\lambda} =f(\bar{\lambda},\hat{t})= \bar{\lambda} \text{cos}\theta-\hat{t}\text{sin}\theta$ and $t =g(\bar{\lambda},\hat{t})= \bar{\lambda}$. As a result of the new coordinate system, the integral in \eqref{AUC_2} can be rewritten as follows:
\begin{equation}
A(\gamma_j)=\frac{1}{2\pi}\!\int^{\infty}_{-\infty}\int^{\infty}_{\frac{a}{\sqrt{b^2+1}}} \!\!\!\!\text{exp}\left(\!-\!\frac{f(\bar{\lambda},\hat{t})^2}{2}\!-\!\frac{g(\bar{\lambda},\hat{t})^2}{2}\right)|J|\;\text{d}\hat{t}\;\text{d}\bar{\lambda},
\end{equation}
where $|J|$ is the Jacobian, it equals one. With the integration order reversed, the square on $\bar{\lambda}$ in exponent completed, and applying Gaussian PDF properties, we obtain
\begin{align}
& A(\gamma_j)=\frac{1}{2\pi}  \int^{\infty}_{\frac{a}{\sqrt{b^2+1}}} \text{exp}\left( -\frac{\hat{t}^2}{2}\right) \text{d}\hat{t}= \mathcal{Q}\left( \frac{a}{\sqrt{b^2+1}} \right),
\end{align}
which completes the proof.

\bibliographystyle{ieeetr}
\bibliography{IEEEabrv,ref}

\begin{thebibliography}{10}

\bibitem{3GPP}
{3GPP}, ``{5G}; study on scenarios and requirements for next generation access technologies,'' {TR} 38.913, 2018.

\bibitem{Nguyen_Van}
N.~V. Huynh, D.~T. Hoang, X.~Lu, D.~Niyato, P.~Wang, and D.~I. Kim, ``Ambient backscatter communications: {A} contemporary survey,'' {\em {IEEE} Commun. Surv. Tutorials}, vol.~20, no.~4, pp.~2889--2922, 2018.

\bibitem{Fatemeh_Rezaei}
F.~Rezaei, C.~Tellambura, and S.~P. Herath, ``Large-scale wireless-powered networks with backscatter communications - {A} comprehensive survey,'' {\em {IEEE} Open J. Commun. Soc.}, vol.~1, pp.~1100--1130, Jul. 2020.

\bibitem{Xu10195838}
F.~Xu, T.~Hussain, M.~Ahmed, K.~Ali, M.~A. Mirza, W.~U. Khan, A.~Ihsan, and Z.~Han, ``The state of {AI}-empowered backscatter communications: {A} comprehensive survey,'' {\em IEEE Internet Things J.}, vol.~10, pp.~21763--21786, Dec. 2023.

\bibitem{Zargari10353962}
S.~Zargari, A.~Hakimi, F.~Rezaei, C.~Tellambura, and A.~Maaref, ``Signal detection in ambient backscatter systems: Fundamentals, methods, and trends,'' {\em IEEE Access}, vol.~11, pp.~140287--140324, Dec. 2023.

\bibitem{Yongjun10109835}
Y.~Xu, R.~Xu, D.~Li, G.~Yang, G.~Wang, C.~Yuen, and J.~Zhou, ``Robust resource allocation for wireless-powered backscatter communication systems with {NOMA},'' {\em IEEE Trans. Veh. Technol.}, vol.~72, pp.~12288--12299, Sep. 2023.

\bibitem{Yiwen_Tao}
Y.~Tao, B.~Li, C.~Zhao, and Y.~Liang, ``Hardware-efficient signal detection for ambient backscattering communications,'' {\em {IEEE} Commun. Lett.}, vol.~23, pp.~2196--2199, Dec. 2019.

\bibitem{Jing_Qian_4}
J.~Qian, F.~Gao, G.~Wang, and S.~Jin, ``Symbol detection and performance analysis of the ambient backscatter system,'' in {\em {IEEE/CIC} Int. Conf. on Comm. in China {(ICCC)}, Chengdu, China, Jul.}, 2016.

\bibitem{Sudarshan_Guruacharya}
S.~Guruacharya, X.~Lu, and E.~Hossain, ``Optimal non-coherent detector for ambient backscatter communication system,'' {\em {IEEE} Trans. Veh. Technol.}, vol.~69, pp.~16197--16201, Dec. 2020.

\bibitem{Zeng_Tengchan}
T.~Zeng, G.~Wang, Y.~Wang, Z.~Zhong, and C.~Tellambura, ``Statistical covariance based signal detection for ambient backscatter communication systems,'' in {\em 84th IEEE IEEE Veh. Tech. Conf. (VTC-Fall), Montreal, QC, Canada, Sep.}, 2016.

\bibitem{Kartheek_Devineni}
J.~K. Devineni and H.~S. Dhillon, ``Exact bit error rate analysis of ambient backscatter systems under fading channels,'' in {\em 88th {IEEE} IEEE Veh. Tech. Conf. ({VTC} Fall), Chicago, IL, USA, Aug.}, 2018.

\bibitem{Devineni_Kartheek_2}
J.~K. Devineni and H.~S. Dhillon, ``Ambient backscatter systems: Exact average bit error rate under fading channels,'' {\em IEEE Trans. Green Commun. Netw.}, vol.~3, pp.~11--25, Nov. 2019.

\bibitem{Yang9950349}
G.~Yang, Z.~Luo, N.~Jin, Y.-C. Liang, Y.~Xu, and G.~Wang, ``Non-coherent parallel detection of ambient backscatter communications with multiple tags,'' {\em IEEE Trans. Veh. Technol.}, vol.~72, pp.~5344--5349, Apr. 2023.

\bibitem{Yunkai_Hu_2}
Y.~Hu, P.~Wang, Z.~Lin, M.~Ding, and Y.~Liang, ``Machine learning based signal detection for ambient backscatter communications,'' in {\em 2019 {IEEE} Int. Conf. on Comm. {(ICCC)}, Shanghai, China, May 2019}, {IEEE}, 2019.

\bibitem{Xiyu_Wang}
X.~Wang, R.~Duan, H.~Yigitler, E.~Y. Menta, and R.~J{\"{a}}ntti, ``Machine learning-assisted detection for {BPSK}-modulated ambient backscatter communication systems,'' in {\em {IEEE} Glob. Commun. Conf. {(GLOBECOM)}, Waikoloa, HI, USA, Dec.}, 2019.

\bibitem{Qianqian_Zhang_1}
Q.~Zhang and Y.~Liang, ``Signal detection for ambient backscatter communications using unsupervised learning,'' in {\em {IEEE} Glob. Commun. Conf. {(GLOBECOM)}, Singapore, Dec.}, 2017.

\bibitem{3gpp_RWS}
3GPP, ``{Passive {IoT} for {5G} advanced},'' {RWS} 210453, 2021.

\bibitem{Morteza_Tavana}
M.~Tavana, A.~Rahmati, V.~Shah{-}Mansouri, and B.~Maham, ``Cooperative sensing with joint energy and correlation detection in cognitive radio networks,'' {\em {IEEE} Commun. Lett.}, vol.~21, pp.~132--135, Jan. 2017.

\bibitem{Jing_Qian_6}
J.~Qian, A.~N. Parks, J.~R. Smith, F.~Gao, and S.~Jin, ``{IoT} communications with {M}-{PSK} modulated ambient backscatter: Algorithm, analysis, and implementation,'' {\em {IEEE} Internet Things J.}, vol.~6, no.~1, pp.~844--855, 2019.

\bibitem{Anran_Wang}
A.~Wang, V.~Iyer, V.~Talla, J.~R. Smith, and S.~Gollakota, ``{FM} backscatter: Enabling connected cities and smart fabrics,'' in {\em 14th {USENIX} Symp. on Netw. Syst. Des. and Impl., Boston, MA, USA}, 2017.

\bibitem{Pengyu_Zhang}
P.~Zhang, M.~Rostami, P.~Hu, and D.~Ganesan, ``Enabling practical backscatter communication for on-body sensors,'' in {\em Proceedings of the {ACM} {SIGCOMM} Conf., Florianopolis, Brazil}, 2016.

\bibitem{Xiaomei}
X.~Zhu, T.~Wang, Y.~Bao, F.~Hu, and S.~Li, ``Signal detection in generalized gaussian distribution noise with {Nakagami} fading channel,'' {\em {IEEE} Access}, vol.~7, pp.~23120--23126, Jan. 2019.

\bibitem{Nikolaos}
N.~I. Miridakis, T.~A. Tsiftsis, and G.~Yang, ``Generalized energy detection under generalized noise channels,'' {\em {IEEE} Wirel. Commun. Lett.}, vol.~9, pp.~2020--2024, Dec. 2020.

\bibitem{mcleish1982robust}
D.~L. McLeish, ``A robust alternative to the normal distribution,'' {\em The Canadian Journal of Statistics/La Revue Canadienne de Statistique}, pp.~89--102, 1982.

\bibitem{Yilmaz}
F.~Yilmaz, ``{M}c{L}eish distribution: Performance of digital communications over additive white {M}c{L}eish noise {(AWMN)} channels,'' {\em {IEEE} Access}, vol.~8, pp.~19133--19195, Jan. 2020.

\bibitem{gradshteyn2014table}
I.~S. Gradshteyn and I.~M. Ryzhik, {\em Table of integrals, series, and products}.
\newblock Academic press, 2014.

\bibitem{Jing_Qian_2}
J.~Qian, F.~Gao, G.~Wang, S.~Jin, and H.~Zhu, ``Noncoherent detections for ambient backscatter system,'' {\em {IEEE} Trans. Wirel. Commun.}, vol.~16, pp.~1412--1422, Mar. 2017.

\bibitem{Urkowitz}
H.~Urkowitz, ``Energy detection of unknown deterministic signals,'' {\em Proceedings of the IEEE}, vol.~55, no.~4, pp.~523--531, 1967.

\bibitem{levy2008principles}
B.~C. Levy, {\em Principles of signal detection and parameter estimation}.
\newblock Springer Science \& Business Media, 2008.

\bibitem{Azar_Hakimi_1}
A.~Hakimi, S.~Zargari, C.~Tellambura, and S.~P. Herath, ``Sum rate maximization of {MIMO} monostatic backscatter networks by suppressing residual self-interference,'' {\em {IEEE} Trans. Commun.}, vol.~71, pp.~512--526, Jan. 2023.

\bibitem{Fatemeh10078244}
F.~Rezaei, D.~Galappaththige, C.~Tellambura, and S.~Herath, ``Coding techniques for backscatter communications—{A} contemporary survey,'' {\em {IEEE} Commun. Surv. Tutorials}, vol.~25, pp.~1020--1058, Mar. 2023.

\bibitem{Manideep}
M.~Dunna, M.~Meng, P.-H. Wang, C.~Zhang, P.~Mercier, and D.~Bharadia, ``{SyncScatter}: Enabling {WiFi} like synchronization and range for {WiFi} backscatter communication,'' in {\em 18th USENIX Symp. on Netw. Syst. Des. and Impl.}, Apr. 2021.

\bibitem{Georgios_Vougioukas_1}
G.~Vougioukas, P.~N. Alevizos, and A.~Bletsas, ``Coherent detector for pseudo-{FSK} backscatter under ambient constant envelope illumination,'' in {\em 19th {IEEE} Int. Wkshp. on Sig. Proc. Adv. in Wireless Commun. {(SPAWC)}, Kalamata, Greece}, 2018.

\bibitem{ecfr2023}
{Federal Communications Commission}, ``Title 47, part 73 - radio broadcast services,'' 2023.
\newblock Electronic Code of Federal Regulations.

\bibitem{3gpp-ts-38.104}
{3GPP}, ``{NR}; base station ({BS}) radio transmission and reception,'' Tech. Rep. TS 38.104, 2020.

\bibitem{Azar_Hakimi}
A.~Hakimi, S.~Zargari, C.~Tellambura, and S.~P. Herath, ``Sum rate maximization of full-duplex {MIMO} monostatic backscatter networks under residual self-interference,'' in {\em 17th Can. Wkshp. on Inf. Theory ({CWIT}), Ottawa, ON, Canada, Jun. 5-8}, pp.~103--108, 2022.

\bibitem{Yang7841620}
G.~Yang and Y.-C. Liang, ``Backscatter communications over ambient ofdm signals: Transceiver design and performance analysis,'' in {\em IEEE Glob. Commun. Conf. (GLOBECOM)}, pp.~1--6, Dec. 2016.

\bibitem{Ruizhe}
R.~Long, Y.~Liang, H.~Guo, G.~Yang, and R.~Zhang, ``Symbiotic radio: {A} new communication paradigm for passive {Internet-of-Things},'' {\em {IEEE} Internet Things J.}, vol.~7, pp.~1350--1363, Nov. 2020.

\bibitem{YangZ2018}
G.~Yang, Q.~Zhang, and Y.~Liang, ``Cooperative ambient backscatter communications for green {Internet-of-Things},'' {\em {IEEE} Internet Things J.}, vol.~5, pp.~1116--1130, Jan. 2018.

\bibitem{gnedenko1954limit}
B.~Gnedenko, A.~Kolmogorov, B.~Gnedenko, and A.~Kolmogorov, ``Limit distributions for sums of independent,'' {\em Am. J. Math}, vol.~105, 1954.

\bibitem{MatlabOTB}
I.~The~MathWorks, {\em MATLAB Optimization Toolbox}.
\newblock Natick, Massachusetts, United States, 2022.
\newblock Available at \url{https://www.mathworks.com/help/optim/ug/fmincon.html}.

\bibitem{Zhi_Quan}
Z.~Quan, S.~Cui, and A.~H. Sayed, ``Optimal linear cooperation for spectrum sensing in cognitive radio networks,'' {\em {IEEE} J. Sel. Top. Signal Process.}, vol.~2, pp.~28--40, Feb. 2008.

\bibitem{kay1993fundamentals}
S.~Kay, ``Fundamentals of statistical signal processing: Estimation theory,'' {\em Technometrics}, vol.~37, no.~4, pp.~465--466, 1993.

\bibitem{Kartheek_Devineni_2}
J.~K. Devineni and H.~S. Dhillon, ``Non-coherent detection and bit error rate for an ambient backscatter link in time-selective fading,'' {\em {IEEE} Trans. Commun.}, vol.~69, pp.~602--618, Jan. 2021.

\bibitem{Kang_Lu}
K.~Lu, G.~Wang, F.~Qu, and Z.~Zhong, ``Signal detection and {BER} analysis for {RF}-powered devices utilizing ambient backscatter,'' in {\em Int. Conf. on Wireless Commun. \& Sig. Proc. {(WCSP)}, Nanjing, China}, 2015.

\bibitem{5208031}
S.~P. Herath, N.~Rajatheva, and C.~Tellambura, ``Unified approach for energy detection of unknown deterministic signal in cognitive radio over fading channels,'' in {\em IEEE Int. Conf. on Comm. ({ICCC})}, pp.~1--5, 2009.

\bibitem{6987540}
S.~Atapattu, C.~Tellambura, H.~Jiang, and N.~Rajatheva, ``Unified analysis of low-{SNR} energy detection and threshold selection,'' {\em {IEEE} Trans. VT.}, vol.~64, pp.~5006--5019, Nov. 2015.

\bibitem{Fadel}
F.~F. Digham, M.~Alouini, and M.~K. Simon, ``On the energy detection of unknown signals over fading channels,'' {\em {IEEE} Trans. Commun.}, vol.~55, pp.~21--24, Jan. 2007.

\bibitem{Ma2015}
Z.~Ma, T.~Zeng, G.~Wang, and F.~Gao, ``Signal detection for ambient backscatter system with multiple receiving antennas,'' in {\em 14th {IEEE} Can. Wkshp. on Inf. Theory {(CWIT)}, St. John's, NL, Canada, Jul.}, 2015.

\bibitem{Wang6388431}
H.~Wang, G.~Noh, D.~Kim, S.~Kim, and D.~Hong, ``Advanced sensing techniques of energy detection in cognitive radios,'' {\em J. Commn. Net.}, vol.~12, pp.~19--29, Feb. 2010.

\bibitem{Kalamkar2015ImpactOA}
S.~S. Kalamkar, A.~K. Gupta, and A.~Banerjee, ``Impact of antenna correlation on optimum improved energy detector in cognitive radio,'' {\em IEICE Trans. Commun.}, vol.~98-B, pp.~1690--1699, Aug. 2015.

\bibitem{wickens2001elementary}
T.~D. Wickens, {\em Elementary Signal Detection Theory}.
\newblock Oxford university press, 2001.

\bibitem{Syed_Safwan}
S.~S. Khalid and S.~Abrar, ``Area under the {ROC} curve of enhanced energy detector,'' in {\em 11th Int. Conf. on Front. of Inf. Technol. ({FIT}), Islamabad, Pakistan}, 2013.

\bibitem{hanley1982meaning}
J.~A. Hanley and B.~J. McNeil, ``The meaning and use of the area under a receiver operating characteristic ({ROC}) curve,'' {\em Radiology}, vol.~143, pp.~29--36, Apr. 1982.

\bibitem{Jing_Qian}
J.~Qian, F.~Gao, G.~Wang, S.~Jin, and H.~Zhu, ``Semi-coherent detection and performance analysis for ambient backscatter system,'' {\em {IEEE} Trans. Commun.}, vol.~65, pp.~5266--5279, Dec. 2017.

\bibitem{Zargari10320395}
S.~Zargari, A.~Hakimi, C.~Tellambura, and A.~Maaref, ``Enhancing {AmBC} systems with deep learning for joint channel estimation and signal detection,'' {\em IEEE Trans. Commun.}, vol.~72, pp.~1716--1731, Mar. 2024.

\bibitem{Diluka}
D.~L. Galappaththige, F.~Rezaei, C.~Tellambura, and S.~P. Herath, ``Link budget analysis for backscatter-based passive {IoT},'' {\em {IEEE} Access}, vol.~10, pp.~128890--128922, Dec. 2022.

\bibitem{Kellogg}
B.~Kellogg, A.~Parks, S.~Gollakota, J.~R. Smith, and D.~Wetherall, ``{Wi-Fi} backscatter: Internet connectivity for {RF}-powered devices,'' in {\em Proceedings of the 2014 ACM Conf. on SIGCOMM}, Association for Computing Machinery, 2014.

\bibitem{rfidworld2019}
``{RFID World}.'' \url{https://rfidworldcapr.wpengine.com/}, 2019.

\bibitem{simon2002probability}
M.~K. Simon, {\em Probability Distributions Involving Gaussian Random Variables: A Handbook for Engineers and Scientists}.
\newblock Springer, 2002.

\bibitem{prudnikov2003integrals}
A.~P. Prudnikov, J.~A. Bry{\v{c}}kov, and O.~I. Mari{\v{c}}ev, {\em Integrals and Series. Vol. 3, More special functions}.
\newblock Gordon and Breach, 2003.

\end{thebibliography}
\newpage

\end{document}